\newcommand{\mbs}[1]{\pmb{#1}}
\newcommand{\vect}[1]{{\lowercase{\mbs{#1}}}}
\newcommand{\mat}[1]{{\uppercase{\mbs{#1}}}}
\newcommand{\T}{{\scriptscriptstyle\mathsf{T}}}
\renewcommand{\H}{{\scriptscriptstyle\mathsf{H}}}
\newcommand{\cond}{\,\vert\,}
\renewcommand{\Re}[1][]{\ifthenelse{\isempty{#1}}{\operatorname{Re}}{\operatorname{Re}\left(#1\right)}}
\renewcommand{\Im}[1][]{\ifthenelse{\isempty{#1}}{\operatorname{Im}}{\operatorname{Im}\left(#1\right)}}
\newcommand{\hv}{\vect{h}}
\newcommand{\rv}{\vect{r}}
\newcommand{\vv}{\vect{v}}
\newcommand{\xv}{\vect{x}}
\newcommand{\zerov}{\vect{0}}
\newcommand{\Hm}{\mat{h}}
\newcommand{\Mm}{\mat{M}}
\newcommand{\Qm}{\mat{q}}
\newcommand{\Wm}{\mat{w}}
\newcommand{\Ic}{{\mathcal I}}
\newcommand{\Kc}{{\mathcal K}}
\newcommand{\Sc}{{\mathcal S}}
\newcommand{\Id}{\mat{\mathrm{I}}}
\newcommand{\CN}[1][]{\ifthenelse{\isempty{#1}}{\mathcal{N}_{\mathbb{C}}}{\mathcal{N}_{\mathbb{C}}\left(#1\right)}}
\renewcommand{\P}[1][]{\ifthenelse{\isempty{#1}}{\mathbb{P}}{\mathbb{P}\left(#1\right)}}
\newcommand{\E}[1][]{\ifthenelse{\isempty{#1}}{\mathbb{E}}{\mathbb{E}\left[#1\right]}}
\newcommand{\I}[1][]{\ifthenelse{\isempty{#1}}{\mathbb{I}}{\mathbb{I}\left\{#1\right\}}}
\renewcommand{\det}[1][]{\ifthenelse{\isempty{#1}}{\mathrm{det}}{\text{det}\left(#1\right)}}
\newcommand{\trace}[1][]{\ifthenelse{\isempty{#1}}{\mathrm{tr}}{\text{tr}\left(#1\right)}}
\newcommand{\rank}[1][]{\ifthenelse{\isempty{#1}}{\mathrm{rank}}{\text{rank}\left(#1\right)}}
\newcommand{\diag}[1][]{\ifthenelse{\isempty{#1}}{\mathrm{diag}}{\text{diag}\left(#1\right)}}
\newcommand{\Cov}[1][]{\ifthenelse{\isempty{#1}}{\mathsf{Cov}}{\mathsf{Cov}\left(#1\right)}}
\newtheorem{proposition}{Proposition}
\newtheorem{remark}{Remark}
\newtheorem{example}{Example}
\newtheorem{corollary}{Corollary}
\newtheorem{lemma}{Lemma}
\renewcommand{\rv}[1]{{\mathrm{#1}}}
\newcommand{\rvVec}[1]{\pmb{\mathrm{#1}}}
\newcounter{enumi_saved}
\renewcommand{\rv}[1]{{\mathrm{#1}}}
\newcommand{\Conv}{\mathsf{Conv}}
\newcommand{\nt}{n_{\text{t}}}
\newcommand{\nr}{n_{\text{r}}}
\newtheorem{claim}{Claim}
\newcommand{\collectionS}{\pmb{\Sc}}
\newcommand{\CG}[1]{\tilde{#1}}
\title{On Linearly Precoded Rate Splitting 
for {Gaussian} MIMO Broadcast Channels}
\author{Zheng Li,~\IEEEmembership{Member,~IEEE,} Sheng~Yang,~\IEEEmembership{Member,~IEEE,}
and\\ Shlomo Shamai~(Shitz),~\IEEEmembership{Fellow,~IEEE}
\thanks{Z. Li is with Orange Labs
         Networks, 92326, Ch\^atillon, France. S. Yang is with the laboratory of signals and systems at
         CentraleSup\'elec, Paris-Saclay University, 91190,
         Gif-sur-Yvette, France.   
         S.~Shamai~(Shitz) is with Technion-Israel Institute of
         Technology, Haifa, Israel. Email:zheng1.li@orange.com,\ 
         sheng.yang@centralesupelec.fr,\ sshlomo@ee.technion.ac.il.}
         \thanks{This paper was presented in part at IEEE SPAWC 2018 and
         IEEE ITW 2018. }
        \thanks{The work of S. Shamai has been supported by the
European Union's Horizon 2020 Research And Innovation Programme,
grant agreement no.~694630.} 
\thanks{Copyright (c) 2017 IEEE. Personal use of this material is
permitted.  However, permission to use this material for any other
purposes must be obtained from the IEEE by sending a request to
pubs-permissions@ieee.org.}
         }
\begin{document}

\maketitle

\begin{abstract}
In this paper, we consider a general $K$-user {Gaussian} multiple-input
multiple-output~(MIMO) broadcast channel~(BC). We assume that the
channel state is deterministic and known to all the nodes. While the
{private-message} capacity region is well known to be achievable with dirty paper coding
(DPC), we are interested in the simpler linearly precoded transmission
schemes. {In particular, we focus on linear precoding
schemes combined with rate-splitting~(RS). First, we derive an achievable
rate region with minimum mean square error~(MMSE) precoding at the
transmitter and joint decoding of the sub-messages at the receivers. 
Then, we study the achievable sum rate of this scheme and obtain two
findings: 1)~an analytically tractable upper bound on the sum rate that
is shown numerically to be a close approximation, and 2)~how to reduce the number of active
streams -- crucial to the overall complexity -- while preserving the sum rate to
within a constant loss. The latter results in two practical algorithms:
a stream elimination algorithm and a stream ordering algorithm. 
Finally, we investigate the constant-gap optimality of linearly precoded
RS with respect to the capacity. Our result reveals that,
while the achievable rate of linear precoding alone can be arbitrarily
far from the capacity, the introduction of RS can help
achieve the capacity region to within a constant gap in the two-user case.
Nevertheless, we prove that the RS scheme's constant-gap optimality does not extend to the three-user case. 
Specifically, we show, through a pathological example, that the gap
between the sum rate and the sum capacity can be unbounded. 
}
 
\end{abstract}

\begin{IEEEkeywords}
  Multiple-input multiple-output~(MIMO), broadcast channel~(BC),
  rate-splitting~(RS), linear precoding, common message, constant-gap
  rate.   
\end{IEEEkeywords}

\section{Introduction}

The capacity region of a multi-antenna~(MIMO) broadcast
channel~(BC) with additive Gaussian noise has been
characterized for more than a decade~\cite{Caire:2003, MIMOBC2006}.
The capacity achieving scheme is essentially the dirty paper coding~(DPC)~\cite{costa1983writing} combined
with the minimum mean square error~(MMSE) precoding. This BC capacity region can be conveniently represented
with the capacity region of the dual multiple-access channel~(MAC)
via the so-called MAC-BC duality~(also known as the uplink-downlink
duality)~\cite{viswanath2003sum, jindal2004duality}.
The main role of the DPC can be regarded as \emph{interference mitigation at
the transmitter side}, i.e., part of the interference is pre-cancelled
for a given receiver at the transmitter side.
The implementation of DPC is however not trivial, due to
its non-linear nature and the fact that it is sensitive to the channel state information at
the transmitter side~(CSIT)~\cite{yang2005impact}. As such, linear precoding is used in most
practical systems instead. Apart from the low implementation complexity, it can be shown that linear precoding
schemes such as zero-forcing~(ZF) achieve the maximum degrees of freedom~(DoF) of the
system~\cite{yoo2006optimality,lee2007high}. Intuitively, ZF is 
sufficient for the transmitter to exploit all the available
dimensions of the signal space in a BC, leading consequently to the DoF optimality. 

Despite its simplicity, the dimension-counting DoF metric is coarse
since it only characterizes the pre-log factor of the {achievable rate}
when the channel gains are bounded while the signal-to-noise ratio~(SNR)
goes to infinity. As a result, it fails to capture the disparity of the
channel strengths among users, and thus {in some cases} provides little information on
the system behavior for different channel realizations. {To see how ZF
can be useless and the DoF metric can be meaningless {for some
channel realizations}, let us consider
the following toy example with two users. Let the channel vectors from
the transmitter to the receivers be $[\begin{matrix}\sqrt{1-\epsilon^2}&
  \epsilon \end{matrix}]$ and $[\begin{matrix}\sqrt{1-\epsilon^2}&
    -\epsilon \end{matrix}]$, respectively, that are linearly
    independent for any $\epsilon\in(0,1)$. With ZF, the beam directions
    for the receivers would be $[\begin{matrix}\epsilon &
      \sqrt{1-\epsilon^2} \end{matrix}]$ and $[\begin{matrix}-\epsilon &
        \sqrt{1-\epsilon^2} \end{matrix}]$, respectively, both
        perpendicular to the other receiver's channel to avoid
        interference. Provided that each stream has power $P/2$ and the
        noise power at each receiver is $1$, the achievable rate for
        each user is $\log(1+2\epsilon^2(1-\epsilon^2) P)$. Note that
        the DoF analysis would completely erase the impact of any
        non-zero $\epsilon$ and give $1$ DoF for each user, while the
        actual rate can be arbitrarily close to $0$ when $\epsilon$ is
        close to $0$ or to $1$. In fact, serving only one user would
        provide a rate $\log(1+P)$, much larger than the sum rate of ZF
        in those extreme cases. Indeed, the two receivers' signal spaces
        can have a non-negligible overlap so that nullifying
        interference at the transmitter~(e.g., ZF) could be highly
        suboptimal. }

 To account for the relative strength of the channel
coefficients, one can let the channel gains of different links grow with
the SNR polynomially with different exponents, and the resulting
pre-log {of the achievable rate} is called the generalized DoF~(GDoF). 
{An even finer characterization is the \emph{constant-gap rate}, a rate
that is within a constant gap to the exact achievable rate for
\emph{any} channel realization. Indeed, since the constant-gap rate and the
exact rate have the same pre-log when the SNR goes to infinity, the GDoF
and DoF can also be derived from the constant-gap rate.}
Therefore, we have the following progressive improvements on
the {rate} approximation~\cite{davoodi2017transmitter}: 
$$ \text{DoF}\ll\text{GDoF}\ll\text{constant-gap}.$$

To circumvent the limitation of interference, we can introduce rate
splitting~(RS) so that interference is \emph{decodable}. 
The idea of using RS to partially mitigate interference was first
proposed for the two-user interference
channels~\cite{carleial1978interference, HK:1981}, in which independent messages are sent by independent
transmitters to their respective receivers.
Essentially, each individual message is split into one private part and one common
part, where the common part is decodable by~(though not intended to)
both receivers. Each receiver decodes and thus can remove the common message from the
interfering transmitter. It turned out that
such a scheme achieves the capacity region of the two-user interference
channel to within 1~{bit/s/Hz}~\cite{etkin2008gaussian}. The same idea can also be applied
to the BCs. In \cite{yang2013degrees}, the authors showed
that RS can provide a strict sum DoF gain of a BC when
only imperfect CSIT is available. 
Extensions to different settings have
been made in later works \cite{davoodi2018gdof, mao2018rate, piovano2017optimal}. Besides, RS has also been considered for robust transmissions under bounded CSIT errors in \cite{joudeh2016robust}.
In contrast to the DPC that pre-cancels interference at the transmitter
side, RS enables the \emph{interference
mitigation at the receivers' side} by letting the interference  
decodable by the receivers. 

{In this work, we are interested in the constant-gap rate of linearly
precoded RS schemes for {Gaussian} MIMO BC.
In particular,} we consider a general RS scheme with MMSE precoding and
{joint decoding of the sub-messages at the receivers}, and
characterize the corresponding achievable rate region. {The main
contributions of our work are summarized as follows. 
\begin{itemize}
  \item A major challenge in investigating the general $K$-user case is
    the large number of rate constraints, up to $K 2^{2^{K-1}}$ in
    general. {Characterizing the maximum sum rate, even up to a constant
    gap, is hard when $K$ is only moderately large. For instance, there
    are about $3\times10^5$ and $2\times 10^{10}$ constraints when $K=5$
    and $K=6$, respectively.} Our contribution here 
    is to analyze the achievable constant-gap sum rate and obtain two meaningful results. 
    First, while the set of rate constraints is large, we
    can carefully choose a subset that leads to $K$ closed-form upper bounds. 
    Remarkably, the proposed upper bound, as the minimum
    value of the aforementioned $K$ upper bounds, turns out to be a
    numerically close approximation. 
    Second, we show how to reduce the number of active streams, which
    is crucial to the overall complexity, while preserving the
    constant-gap sum rate. Specifically, we propose two practical
    algorithms: 1)~a stream elimination algorithm that removes streams
    without causing a rate loss more than a given target value, and 2)~a stream
    ordering algorithm that orders the entire set of $2^K-1$ streams
    according to their impact to the sum rate. 
  \item A central theoretical question on any communication scheme is
    whether it achieves the channel capacity or, if not, how far it is
    from capacity-achieving. Our contribution here is to investigate
    whether linearly precoded RS is constant-gap optimal, that is,
    achieves the capacity to within a constant gap~(constant-gap
    capacity in short). After showing that any linear precoding scheme alone
    cannot be constant-gap optimal, we go on and prove that with
    rate-splitting the entire achievable constant-gap rate region coincides with the capacity
    region in the two-user case. Nevertheless, we
    show that such optimality does not extend beyond two users.
    Specifically, we use the derived sum rate upper bound and a simple pathological
    channel realization to demonstrate an unbounded gap to
    the sum capacity in the three-user case. {In fact, the RS scheme is not even GDoF optimal in
    this example. We argue that without a proper codebook design for
    interference alignment or interference pre-cancellation~(e.g., DPC),
    the independent interference streams become overwhelming for each
    individual receiver to decode. Our study thus reveals a fundamental
    gap between the receiver-side interference mitigation and the
    transmitter-side interference mitigation.}
\end{itemize}
}

In the literature, there have been quite a few works on RS for BC in
recent years. While some of the works consider perfect CSIT, most of
works apply RS to mitigate interference caused by CSIT imperfection. In
particular, one can consider the GDoF while letting the CSIT error
scales as $\mathsf{SNR}^{-\beta}$, where $\beta\ge0$ is used to measure
the CSIT accuracy~\cite{joudeh2016robust}. In such a setting, the
presence of the common message to all users is necessary to make full
use of the transmit power. It is shown in~\cite{davoodi2017transmitter}
that using only one common message together with the private messages is
optimal in a \emph{symmetric} $K$-user setting. Many other works focus on the
optimization problems related to precoder designs for different channel
models, with perfect or imperfect CSIT assumptions. Some optimize the
sum rate~\cite{mao2018rate}, while others focus on
the pre-log factor at high SNR
\cite{piovano2017optimal,joudeh2016robust,joudeh2016sum}. MMSE precoder
has also been considered in previous works but {is} only limited for private
messages \cite{lu2017mmse,dai2016rate}.  In this work, we are interested
in the unanswered, yet fundamental, question of whether the RS scheme
can be close to optimal in a stronger sense than the DoF even with
perfect CSIT~(which can be regarded as an extreme case of the imperfect
CSIT). 
{Our work's technical and new contribution beyond the state of the art is the constant-gap analyses of linearly precoded RS schemes.
Furthermore, while most of the existing works on the RS are limited to
one or two layers of common messages~\cite{mao2018rate,dai2016rate}, our work is the first to
characterize an analytical upper bound of the general $K$-user RS
scheme, to the best of our knowledge. This is also the first attempt to
propose a constructive way {with an analytical criterion} to reduce the number of streams according to 
the actual channel realization, in contrast to previous works
{that apply numerical simulations for such purposes. }

The remainder of the paper is organized as follows.  We present the
channel model in Section~\ref{sec:model}. {In
Section~\ref{sec:RS}, we describe the RS scheme in its most general form with MMSE
precoding, {and derive the achievable constant-gap rate region}. 
The constant-gap sum rate is analyzed in Section~\ref{sec:sum-rate}
with the general $K$-user sum rate upper bound
derived in Section~\ref{sec:Kuser} and the stream elimination and
ordering algorithms described in Section~\ref{sec:Algo}. The
constant-gap optimality is then investigated in
Section~\ref{sec:optimality}, before we  
conclude the paper in Section~\ref{sec:conclusion}.}

\section{System Model and Preliminaries} \label{sec:model}
\subsubsection*{Notation}
In this paper, we use the following notational conventions. For random
quantities, we use upper case non-italic letters, e.g., $\rv{X}$, for
scalars, {and} upper case
non-italic bold letters, e.g., $\rvVec{V}$, for vectors. Deterministic quantities are denoted in a rather conventional way
with italic letters, e.g., a scalar $x$, a vector $\pmb{v}$, and a matrix
$\pmb{M}$. Logarithms are in base $2$.  The Euclidean norm of a vector and a
matrix is denoted by $\|\vv\|$ and $\|\Mm\|$, respectively. $\Mm^T$, $\Mm^H$, $\trace(\Mm)$ and $\det(\Mm)$ is the transpose, the conjugate transpose, the trace and the determinant of a matrix $\Mm$, respectively. 
{$\Mm(i,j)$ is the $(i,j)$-th entry of the matrix $\Mm$.} 
$[K]$ is the
set $\{1,\ldots,K\}$, while $[n]$ represents the 
set $\{1,\ldots,n\}$. Subsets are denoted with calligraphic capitalized
letters, e.g., $\mathcal{K}$ and $\mathcal{S}$. $|\mathcal{S}|$ represents the cardinality of the set $\mathcal{S}$. 
We use $2^\mathcal{K}$ to denote the power set of $\mathcal{K}$, i.e.,
the collection of all subsets of $\mathcal{K}$. We use $\bar{\Kc}$ to
denote the complement set of $\Kc$, i.e., $\bar{\Kc} = [K]\setminus
\Kc$ if $[K]$ is the whole set.
To avoid confusion, we use bold calligraphic letters to specify sets of
sets, e.g., $\pmb{\Sc}$, which are referred to as \emph{collections}. 
For convenience, we use
$\{A_k\}_k$ to denote the set $\{A_k:\,k=1,\ldots,K\}$ where $A$ can be
any object.  
The colon equal ``:='' denotes equality by definition or assignment. 
``$\Conv$'' stands for the convex hull operation. 
{$(x)^+ := \max\left\{ x,0 \right\}$.}
Throughout the paper, we use ``$\approx$'' for constant-gap
approximation.

\subsection{Channel model}
We consider a $K$-user time-invariant and frequency-flat {Gaussian} MIMO BC where
the transmitter has $\nt$ antennas. The channel output at receiver~$k$,
$k\in[K]$, at time $t$, $t\in[n]$, is
\begin{align}
  \rvVec{Y}_{\!k}[t] = \Hm_{\!k} \, \xv[t] + \rvVec{Z}_k[t], 
\end{align}%
or, in a compact form
\begin{align}
  \bigl[  \rvVec{Y}_{\!1}^\T[t] \ \cdots\ \rvVec{Y}_{\!K}^\T[t]
   \bigr]^\T &= \Hm \xv[t] +
    \rvVec{Z}[t], 
\end{align}%
where $\rvVec{Z}[t] \sim \mathcal{CN}(0,\Id)$ is the temporally
independent and identically distributed~(i.i.d.) additive white Gaussian
noise~(AWGN) with normalized variance;
$\Hm_{\!k}\in\mathbb{C}^{n_{\text{r},k}\times\nt}$ is the channel matrix from the
transmitter to the receiver~$k$, $n_{\text{r},k}$ being the number of
antennas at receiver~$k$; $\Hm :=
\bigl[\Hm_{\!1}^\T\,\cdots\,\Hm_{\!K}^\T\bigr]^\T$ is the
global channel matrix assumed to be deterministic and is known globally.
The input sequence is subject
to the power constraint $\frac{1}{n}\sum_{t=1}^n \| \xv[t] \|^2 \le P$
where $P$ is identified with the SNR. 

\subsection{Capacity region}\label{sec:capacity_def}

Let us assume that the transmitter sends an independent message to each
receiver~$k$, $k\in[K]$, at a rate $R_k$ {bits/s/Hz}. The capacity region,
denoted by $\mathcal{C}_{\text{BC}}(\{\Hm_{\!k}\}_k, P)$, 
is the set of rate-tuples $(R_1,\ldots,R_K)$ such that the probability
of decoding error can be arbitrarily small when $n\to\infty$.   
This capacity region can be conveniently characterized with the so-called
MAC-BC duality. Namely, the capacity region of a MIMO BC with power
constraint $P$ is the union of the capacity regions of the dual MAC over all individual power constraints that sum to $P$, i.e.,
\begin{IEEEeqnarray}{rCl}
  \mathcal{C}_{\text{BC}}(\{\Hm_{\!k}\}_k, P) =
  \bigcup_{\{\Qm_k\}_k: \sum_{k=1}^K\!\! \trace(\Qm_k)\le P}\!\!\!\!\!\!\!\!\!\!\!
  \mathcal{C}_{\text{MAC}}(\{\Hm_{\!k}^\H\}_k, \{\Qm_k\}_k),
  \IEEEeqnarraynumspace
\end{IEEEeqnarray}%
where the region $\mathcal{C}_{\text{MAC}}(\{\Hm_{\!k}^\H\}_k, \{\Qm_k\}_k)$
denotes the capacity region of the dual MAC under the individual
covariance constraint $\Qm_k$, $k\in[K]$. In fact, it is well
known~{(see, e.g.,~\cite{tse1998multiaccess, NIT})} that
$\mathcal{C}_{\text{MAC}}(\{\Hm_{\!k}^\H\}_k, \{\Qm_k\}_k)$ is a polymatroid
with the set of {non-negative} rate tuples satisfying
\begin{align}
\sum_{k\in\mathcal{K}} R_k \le \log\det\Bigl(\Id + \sum_{k\in\mathcal{K}}
\Hm_{\!k}^\H \Qm_k \Hm_{\!k}\Bigl), \quad \forall\,\mathcal{K}\subseteq [K]. \label{eq:C-MAC}
\end{align}%
Since the log-det function is increasing with the partial ordering of
positive semi-definite matrices, it follows from $\Qm_k\preceq P\,\Id$,
$k\in[K]$, that
\begin{equation}
  \mathcal{C}_{\text{MAC}}\Bigl(\{\Hm_{\!k}^\H\}_k, \frac{P}{\nr}\Id\Bigr)
  \subseteq \mathcal{C}_{\text{BC}}(\{\Hm_{\!k}\}_k, P) \subseteq
  \mathcal{C}_{\text{MAC}}(\{\Hm_{\!k}^\H\}_k, P\Id), \label{eq:mac-bc}
\end{equation}
where $\nr:=\sum_{k=1}^K n_{\text{r},k}$. Hence, we have the following
lemma. 
\begin{lemma}\label{app:lemma1}
  The BC capacity region $\mathcal{C}_{\text{BC}}(\{\Hm_{\!k}\}_k, P)$ is
  within $\gamma:=\nt\log \nr$ {bits/s/Hz} to the MAC capacity region
$\mathcal{C}_{\text{MAC}}(\{\Hm_{\!k}^\H\}_k, P\Id)$, such that, 
\begin{multline}
  {\text{if } (R_1,\ldots,R_K) \in
  \mathcal{C}_{\text{MAC}}(\{\Hm_{\!k}^\H\}_k, P\Id),} \\ \text{then }
  ( (R_1-\gamma)^+,\ldots,(R_K-\gamma)^+)
  \in\mathcal{C}_{\text{BC}}(\{\Hm_{\!k}\}_k,
  P). \nonumber 
\end{multline}
\end{lemma}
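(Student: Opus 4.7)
The plan is to exploit the two-sided inclusion \eqref{eq:mac-bc}: since the BC region sandwiches between the two MAC regions, it suffices to show that the shifted tuple $((R_1-\gamma)^+,\ldots,(R_K-\gamma)^+)$ already lies in the \emph{smaller} MAC region $\mathcal{C}_{\text{MAC}}(\{\Hm_{\!k}^\H\}_k, \frac{P}{\nr}\Id)$. By the polymatroid description~\eqref{eq:C-MAC}, this reduces to verifying, for every $\mathcal{K}\subseteq[K]$, the inequality
\begin{equation*}
  \sum_{k\in\mathcal{K}} (R_k-\gamma)^+ \le \log\det\Bigl(\Id + \frac{P}{\nr}\sum_{k\in\mathcal{K}} \Hm_{\!k}^\H \Hm_{\!k}\Bigr).
\end{equation*}

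The technical core is a one-line scaling estimate: for any positive semi-definite $\nt\times\nt$ matrix $A$,
\begin{equation*}
  \log\det(\Id + P A) - \log\det\Bigl(\Id + \tfrac{P}{\nr} A\Bigr) \le \nt \log \nr = \gamma,
\end{equation*}
which follows by diagonalizing $A$ and noting that for each eigenvalue $\lambda\ge 0$ the ratio $(1+P\lambda)/(1+\tfrac{P}{\nr}\lambda)$ is at most $\nr$, while at most $\nt$ eigenvalues contribute. I would then apply this with $A=\sum_{k\in\mathcal{K}'} \Hm_{\!k}^\H \Hm_{\!k}$ for a suitably chosen $\mathcal{K}'\subseteq\mathcal{K}$.

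The main obstacle, and the only real subtlety, is the $(\cdot)^+$ nonlinearity: naively subtracting $\gamma$ from each term of the MAC inequality written on $\mathcal{K}$ costs $|\mathcal{K}|\gamma$ on the left, whereas the scaling lemma only provides a slack of $\gamma$ on the right. I would sidestep this by restricting to the \emph{active} index set $\mathcal{K}' := \{k\in\mathcal{K}: R_k > \gamma\}$, so that $\sum_{k\in\mathcal{K}}(R_k-\gamma)^+ = \sum_{k\in\mathcal{K}'}(R_k-\gamma)$. Invoking the hypothesis $(R_1,\ldots,R_K)\in\mathcal{C}_{\text{MAC}}(\{\Hm_{\!k}^\H\}_k, P\Id)$ on the subset $\mathcal{K}'$ and combining with the scaling estimate yields
\begin{equation*}
  \sum_{k\in\mathcal{K}'}(R_k-\gamma) \le \log\det\Bigl(\Id + \tfrac{P}{\nr}\sum_{k\in\mathcal{K}'} \Hm_{\!k}^\H \Hm_{\!k}\Bigr) + (1-|\mathcal{K}'|)\gamma.
\end{equation*}
Since $|\mathcal{K}'|\ge 1$ whenever the left-hand side is nonzero, the residual $(1-|\mathcal{K}'|)\gamma$ is non-positive; monotonicity of $\log\det$ under the semi-definite ordering then allows me to enlarge $\mathcal{K}'$ back to $\mathcal{K}$ on the right-hand side, completing the proof. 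The case $\mathcal{K}'=\emptyset$ is immediate since both sides reduce to a non-negative quantity on the right and zero on the left.
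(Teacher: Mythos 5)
Your proposal is correct and follows essentially the same route as the paper: both reduce the claim via the sandwich \eqref{eq:mac-bc} to showing that $\mathcal{C}_{\text{MAC}}(\{\Hm_{\!k}^\H\}_k, \frac{P}{\nr}\Id)$ is within $\gamma$ of $\mathcal{C}_{\text{MAC}}(\{\Hm_{\!k}^\H\}_k, P\Id)$, using the identical scaling estimate $\log\det(\Id+P A)-\log\det(\Id+\frac{P}{\nr}A)\le \nt\log\nr$. The only difference is that you spell out the handling of the $(\cdot)^+$ truncation via the active index set $\mathcal{K}'$, a detail the paper dismisses as ``immediate from the definition'' of the polymatroid region.
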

\begin{proof}
  From \eqref{eq:mac-bc}, we only need to show that
  $\mathcal{C}_{\text{MAC}}(\{\Hm_{\!k}^\H\}_k, \frac{P}{\nr}\Id)$ is
  within $\gamma$ {bits/s/Hz} to $\mathcal{C}_{\text{MAC}}(\{\Hm_{\!k}^\H\}_k,
P\Id)$. Indeed, since $\log\det(\Id +\sum_{k\in\mathcal{K}}\!
\frac{P}{\nr} \Hm_{\!k}^\H \Hm_{\!k}) {=} \log\det(\nr\Id +\sum_{k\in\mathcal{K}}
{P} \Hm_{\!k}^\H \Hm_{\!k}) - \gamma \!\ge \!\log\det(\Id + \!\sum_{k\in\mathcal{K}}
{P} \Hm_{\!k}^\H \Hm_{\!k}) - \gamma$, the proof is immediate from
the definition of the MAC region in \eqref{eq:C-MAC}. 
\end{proof}

An optimal scheme that achieves the exact capacity region consists in
combining the DPC with the MMSE
precoding~\cite{NIT}. Specifically, for a given encoding order, each message is 
first encoded using Costa's DPC that pre-cancels the previously encoded
signals, and is then precoded with the MMSE matrix. 
In this way, each receiver only sees the interference from the 
messages that are encoded afterwards. Here we can clearly see the duality
between the successive encoding of the BC and the successive interference
cancellation~(SIC) decoding of the MAC. Since part of the interference
is pre-cancelled at the transmitter side and the receivers treat the
residual interference as additive noise, such a scheme can be regarded
as transmitter-side interference mitigation. 
While the MMSE precoding is linear, the DPC is non-linear and can be implemented with
nested lattices~\cite{zamir2002nested}.

\subsection{Constant-gap rate region}\label{sec:def_constant}

In the following, we give a formal definition of the constant-gap rate
region. Let $\mathcal{C}(\Hm,P)$ be the capacity region of a given
channel $\Hm$ with power constraint $P$, and $\mathcal{R}(\Hm,P)$ the achievable rate
region of some scheme. We say that $\CG{\mathcal{R}}(\Hm,P)$ is an
achievable constant-gap rate region with respect to
$\mathcal{R}(\Hm,P)$, if   
\begin{equation}
  \sup_{P\ge0,\Hm}\max_{\CG{\pmb{r}}\in\CG{\mathcal{R}}} \min_{{\pmb{r}}\in{\mathcal{R}}}\|\tilde{\pmb{r}}-\pmb{r}\| < \infty. 
\end{equation}
In words, it means that every rate tuple in the constant-gap rate region
is achievable by the given scheme, to within a constant gap. A scheme is said
to be constant-gap optimal if the capacity region $\mathcal{C}(\Hm,P)$
is an achievable constant-gap rate region with respect to
$\mathcal{R}(\Hm,P)$. $\bar{R}(\Hm,P)$ is said to be a constant-gap sum rate
upper bound of a given scheme if there exists a constant~$\gamma$ such
that any achievable rate by the given scheme is upper bounded by
$\bar{R}(\Hm,P)+\gamma$ for any channel realization $\Hm$ and transmit
power $P$. 

For future reference, we also recall the DoF and GDoF optimalities. 
The scheme is DoF optimal if 
\begin{equation}
  \lim_{P\to\infty}\max_{\tilde{\pmb{r}}\in\mathcal{C}}\min_{\pmb{r}\in{\mathcal{R}}}\frac{\|\tilde{\pmb{r}}-\pmb{r}\|}{\log
  P} = 0,\ \forall\,\Hm.
\end{equation} 
The scheme is GDoF optimal if we let each entry of the channel matrix scale as $H_{ij} = \tilde{H}_{ij} P^{\alpha_{ij}}$ and 
\begin{equation}
\lim_{P\to\infty}\max_{\tilde{\pmb{r}}\in\mathcal{C}}\min_{\pmb{r}\in{\mathcal{R}}}\frac{\|\tilde{\pmb{r}}-\pmb{r}\|}{\log P} = 0,\ \forall\,\tilde{H}_{ij}, \alpha_{i,j}.
\end{equation}
{Note that any above optimality still holds when we scale the power $P$
by a constant. Therefore, throughout the paper, we scale the power
whenever it is convenient. }

\subsection{Linear precoding with point-to-point codes}
By removing the DPC from the transmitter, we have a much 
simpler but strictly suboptimal scheme, namely, the linear precoding
scheme. Specifically, by linear precoding, here we refer to a particular
class of schemes such that, 1) independent point-to-point
\emph{Gaussian} codebooks\footnote{A discussion on non-Gaussian
signaling is provided in Section~\ref{sec:LP}, Remark~\ref{remark:nonGaussian}.} are used to encode the $K$ streams; 2) the
transmitted signal is a linear combination of the $K$~codewords; and
3)~interferences are treated as noise at each receiver. 
Under these assumptions, a single-letter rate region can be obtained in
terms of the input random variable $\rvVec{X} = \sum_{k=1}^K \rvVec{X}_k$
with \emph{independent Gaussian distributed} $\{\rvVec{X}_k\}_k$ such
that $\E[\rvVec{X}_k \rvVec{X}_k^\H] = \Qm_k$ and $\sum_{k=1}^K
\trace(\Qm_k) \le P$. Then, the rate region achieved by such a linear precoding
is
\begin{IEEEeqnarray}{rCl}
  \IEEEeqnarraymulticol{3}{l}{
  \mathcal{C}_{\text{BC}}^{\text{LP}}(\{\Hm_{\!k}\}_k, P)} \nonumber \\
  \IEEEeqnarraymulticol{3}{l}{
  = \! \bigcup_{\{\Qm_k\}_k:\
  \sum_{k=1}^K \trace(\Qm_k)\le P} \biggl\{
  (R_1,\ldots,R_K) {\in \mathbb{R}_+^K}: } \nonumber \\
  R_k \le \log\det\biggl( \Id + \Bigl(\Id+\sum_{l\ne k}\Hm_{\!k}
  \Qm_l \Hm_{\!k}^\H\Bigr)^{-1} \Hm_{\!k} \Qm_k \Hm_{\!k}^\H \biggr)
  \biggr\}.\nonumber
\end{IEEEeqnarray}%

Note that the region $\mathcal{C}_{\text{BC}}^{\text{LP}}(\{\Hm_{\!k}\}_k,
P)$ is not convex. With a simple time-sharing strategy, we can achieve the 
convex hull of the region. The time-sharing strategy can also be
generalized to the resource-sharing strategy. Specifically, one can divide
the whole resource~(e.g., time and frequency) into orthogonal portions,
say, $\lambda_1,\ldots,\lambda_N$, such that
$\lambda_1+\cdots+\lambda_N=1$ and $\lambda_i>0$, $\forall\,i$. In each portion $i$ of the resource, we can
perform the linear precoding with covariance
matrices~$\{\Qm_{k}^{(i)}\}_k$. Instead of imposing that $\sum_{k=1}^K
\trace(\Qm_k^{(i)}) \le P$, $\forall\,i$, we only let $\sum_{i=1}^N
\lambda_i \sum_{k=1}^K \trace(\Qm_k^{(i)}) \le P$. Although the
resource-sharing strategy can improve the achievable rate region, we can show that
the improvement is bounded. 
\begin{lemma}\label{lemma:2}
  With linear precoding schemes, the achievable rate
  region with the resource-sharing strategy described above is within
  $\nr$~{bits/s/Hz} to the region with only time-sharing, that is, 
  \begin{align}
    \Conv \left\{ \mathcal{C}_{\text{BC}}^{\text{LP}}(\{\Hm_{\!k}\}_k,
  P) \right\}. 
  \end{align}%
\end{lemma}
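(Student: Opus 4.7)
The plan is to convert any resource-sharing scheme into a time-sharing scheme while losing at most $\nr$ bits per user. Given portions $\{\lambda_i\}_{i=1}^N$ and covariances $\{\Qm_k^{(i)}\}$ with per-portion totals $P_i := \sum_k \trace(\Qm_k^{(i)})$ satisfying $\sum_i \lambda_i P_i \le P$, I would define rescaled covariances $\tilde{\Qm}_k^{(i)} := \min\{1,\, P/P_i\}\,\Qm_k^{(i)}$. Each portion then respects $\sum_k \trace(\tilde{\Qm}_k^{(i)}) \le P$, so the resulting per-portion rate tuple lies in $\mathcal{C}_{\text{BC}}^{\text{LP}}(\{\Hm_{\!k}\}_k, P)$ and the time-sharing mixture $\tilde R_k := \sum_i \lambda_i \tilde R_k^{(i)}$ lies in its convex hull.

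The key technical step is to bound $R_k^{(i)} - \tilde R_k^{(i)}$ on portions where $P_i > P$. Setting $\alpha_i := P/P_i \in (0,1)$ and writing each rate as the log-det difference $\log\det(\Id + \sum_l \Hm_{\!k}\Qm_l^{(i)}\Hm_{\!k}^\H) - \log\det(\Id + \sum_{l\neq k}\Hm_{\!k}\Qm_l^{(i)}\Hm_{\!k}^\H)$, the loss decomposes into the gap of the total-covariance log-det at scale $1$ versus $\alpha_i$, minus the analogous gap for the interference-only covariance. The latter is nonnegative and can be dropped. The former, after diagonalizing the PSD matrix and using the scalar bound $(1+\mu)/(1+\alpha_i\mu) \le 1/\alpha_i$ eigenvalue-wise, is at most $n_{\text{r},k}\log(1/\alpha_i) = n_{\text{r},k}\log(P_i/P)$. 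On portions with $P_i \le P$ the choice $\tilde{\Qm}_k^{(i)} = \Qm_k^{(i)}$ gives $R_k^{(i)} = \tilde R_k^{(i)}$ trivially.

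To aggregate, I would invoke Jensen's inequality on the concave map $x \mapsto \log(1+x)$ together with the budget $\sum_i \lambda_i P_i \le P$:
\begin{equation*}
R_k - \tilde R_k \;\le\; n_{\text{r},k}\sum_i \lambda_i \log^+(P_i/P) \;\le\; n_{\text{r},k}\sum_i \lambda_i \log(1+P_i/P) \;\le\; n_{\text{r},k}\log\Bigl(1 + \sum_i \lambda_i P_i/P\Bigr) \;\le\; n_{\text{r},k} \;\le\; \nr,
\end{equation*}
which yields the claimed $\nr$-bit gap. The main obstacle is obtaining the per-portion bound with a clean dependence on the number of receive antennas rather than on the interference power: one must isolate and discard the nonnegative interference-gap term before applying the eigenvalue estimate, since a naive bound on the full rate difference does not uniformly control the loss when the aggregate interference dominates the signal term.
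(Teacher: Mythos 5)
Your proposal is correct and follows essentially the same route as the paper's proof: rescale the per-portion covariances to meet the per-portion power budget, bound the per-portion rate loss by the receive dimension times the log of the scaling factor after discarding the nonnegative interference-gap term, and aggregate via Jensen to get at most one extra bit of scaling loss. The only cosmetic differences are that the paper scales by $P_s/P_i$ with $P_s:=\sum_i\lambda_i P_i$ rather than $P/P_i$, and it coarsens $n_{\text{r},k}$ to $\nr$ earlier; both yield the claimed $\nr$-bit gap.
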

\begin{proof}
  See Appendix~\ref{app:lemma2}. 
\end{proof}

Therefore, it is without loss of constant-gap optimality to focus on the simple
time-sharing strategy.

\subsection{Single-antenna~(SISO) BC}
\label{sec:SISO}

In the single-antenna case, i.e., when the transmitter and all the
receivers have each only one antenna, the analysis becomes easier.
We can prove that the rate region of the linear scheme 
$\Conv\left\{\mathcal{C}_{\text{BC}}^{\text{LP}}(\{h_k\}_k, P)\right\}$
is not constant-gap optimal. Let us consider the two-user case in which $|h_1|\gg|h_2|$.
From the MAC-BC duality, let $P_1=P_2=\frac{1}{2}P$, the following rates are
achievable
\begin{align}
  R_1 &= \log(1+P_1|h_1|^2 + P_2|h_2|^2) - \log(1+P_2|h_2|^2) \nonumber\\
   &\approx \log(1+P|h_1|^2) - \log(1+P|h_2|^2), \\
  R_2 &= \log(1+P_2|h_2|^2) \approx \log(1+P|h_2|^2),
\end{align}%
where we recall that ``$\approx$'' stands for constant-gap approximation. In contrast,
with the linear scheme, the achievable rate of user~2 is
$R'_2 = \log(1+\frac{\tilde{P}_2 |h_2|^2}{1+\tilde{P}_1 |h_2|^2})$ 
where $\tilde{P}_1, \tilde{P}_2$ are the power for user~1 and user~2,
respectively, with $\tilde{P}_1+\tilde{P}_2\le P$. 
In order for user~2 to achieve $\log(1+P|h_2|^2)$ {bits/s/Hz} with the linear scheme, 
the interference term $\tilde{P}_1|h_2|^2$ in $R'_2$ must remain bounded while the power
$\tilde{P}_2$ should be within a constant factor to $P$. Consequently, user~1's
rate must be bounded by a constant, since 
\begin{align}
  R'_1 &= \log(1+\frac{\tilde{P}_1 |h_1|^2}{1+\tilde{P}_2 |h_1|^2}) \\
   &\le \log(1+\frac{{P} |h_1|^2}{\tilde{P}_2 |h_1|^2})
  \\&\approx 0, 
\end{align}
where the last equality is from the fact that $\tilde{P}_2$ is within a
constant factor to $P$. 

Nevertheless, we know that single-user transmission achieves the sum capacity
to within a constant gap. Indeed, if we only serve the user with the
strongest channel gain, say, $|h_1| = \max_{k\in[K]} |h_k|$, then the sum rate
is 
\begin{align}
  \log(1+P|h_1|^2) &\ge 
  \log\Bigl(1+\frac{1}{K}P\sum_{k=1}^K |h_k|^2\Bigr) \\
  &\ge \log\Bigl(1+P\sum_{k=1}^K |h_k|^2\Bigr) - \log(K), 
\end{align}%
whereas the sum capacity is, from the MAC-BC duality, 
\begin{IEEEeqnarray}{rCl}
  \max_{\sum_k\!P_k\le P} \log\Bigl(1+\sum_{k=1}^K P_k |h_k|^2\Bigr) &\le  
  \log\Bigl(1+\sum_{k=1}^K P |h_k|^2\Bigr). \IEEEeqnarraynumspace 
\end{IEEEeqnarray}

\begin{remark}
Note that the capacity region of a SISO BC can be achieved by
linear superposition coding, since the channel is stochastically
degraded~\cite{NIT}. So a linear scheme does achieve capacity in this case.
However, the receivers need to decode a subset of the interfering
signals in order to achieve the capacity. Specifically,
each receiver needs to first decode all the messages for the receivers
with weaker channel gains, then remove the interference before decoding
the intended message. In fact, this is a simple form of RS
in which the message for the weakest user is indeed a common message
that needs to be decoded by~(although not intended to) all the 
users. The performance of linearly precoded RS in a general
MIMO setting is the main subject of this paper, and will be treated in
Section~\ref{sec:RS}. 
\end{remark}

\section{Rate-Splitting with MMSE Precoding}\label{sec:RS}

In this section, we introduce a RS scheme at the transmitter side, and
describe this scheme in the general MIMO case with $K$ users. We shall
derive the corresponding achievable rate region in its general form. 

\subsection{$K$-user BC with common messages} \label{sec:RS1}
The considered RS scheme builds on a general $K$-user scheme with
common messages. It is worth mentioning that the capacity region of the
two-user MIMO BC with common message has been completely characterized
in~\cite{geng2014capacity}. In that work, the authors showed that
Marton's inner bound based on binning is indeed tight with Gaussian
signaling. Here, we shall investigate the general $K$-user case but only
on the achievable rate region with independent point-to-point codebooks.   

First, let $\bigl\{\rv{M}_{\mathcal{K}}:\ \mathcal{K}\subseteq [K], \mathcal{K}\ne
\emptyset\bigr\}$ be a set of $2^K-1$ independent messages, each one with rate
$R_{\mathcal{K}}$ {bits/s/Hz}. 
These messages are encoded with independent Gaussian codebooks, each
generated identically and independently according to a
distribution~$\rv{X}_{\mathcal{K}}\sim\mathcal{CN}(0,\Qm_{\mathcal{K}})$,
$\forall\,\mathcal{K}\subseteq{[K]}, \mathcal{K}\ne \emptyset$, with
\begin{align}
  \Qm_{\mathcal{K}} &= \left( P_{\mathcal{K}}^{-1}\Id + \Hm_{\!\bar{\mathcal{K}}}^\H
  \Hm_{\!\bar{\mathcal{K}}} \right)^{-1},
\end{align}
 where $\bar{\mathcal{K}} := [K]\setminus \mathcal{K}$ and
$\Hm_{\!\bar{\mathcal{K}}}$ is a matrix formed by the vertical
concatenation of the channel matrices of the users in $\bar{\mathcal{K}}$,
with the convention $\Hm_{\emptyset} = 0$;
the coefficients $\{P_{\mathcal{K}}\}$ are chosen to satisfy the power
constraint $\sum_{\mathcal{K}} \trace\left( \Qm_{\mathcal{K}} \right)
\le P$.
Such a precoding scheme is known as the MMSE precoding. The idea behind
the MMSE precoding is to limit the interference power at the unintended
receivers. Indeed, the covariance matrix of
$\rvVec{X}_{\mathcal{K}}$ at the set $\bar{\mathcal{K}}$ of users is
{
\begin{IEEEeqnarray}{rCl}
  \E[\Hm_{\!\bar{\mathcal{K}}}\rvVec{X}_{\mathcal{K}}(\Hm_{\!\bar{\mathcal{K}}}\rvVec{X}_{\mathcal{K}})^\H] &=
  \Hm_{\!\bar{\mathcal{K}}}\left( P_{\mathcal{K}}^{-1}\Id + \Hm_{\!\bar{\mathcal{K}}}^\H
  \Hm_{\!\bar{\mathcal{K}}} \right)^{-1}\Hm_{\!\bar{\mathcal{K}}}^\H \preceq
  \Id,\IEEEeqnarraynumspace
\end{IEEEeqnarray}}%
that is, below the AWGN level. Unlike the ZF precoding that completely
nullifies interference, the MMSE precoding is known to achieve a better
trade-off between interference and signal power. Further, the application
of the ZF precoding is possible only when a non-empty interference null
space exists, whereas the MMSE precoding is feasible in general.
The transmitted signal is a superposition of all the streams
\begin{align}
\rvVec{X} = \sum_{\mathcal{K}\subseteq[K]}\rvVec{X}_{\mathcal{K}}. 
\end{align}

Next, each receiver $k$ jointly decodes the set of messages
$\{\rv{M}_{\mathcal{K}}:\,\mathcal{K}\ni k\}$ by treating the interferences
$\{\rvVec{X}_{\mathcal{K}'}:\, \mathcal{K}'\not\ni k\}$ as noise. Thus,
for each receiver $k$, it is
equivalent to a virtual MAC whose achievable rate region is the set of
{non-negative rate tuples} satisfying, for every collection $\collectionS_k \subseteq
\{\mathcal{K}:\, \mathcal{K}\ni k\}$,
\begin{IEEEeqnarray}{rCl}
  \IEEEeqnarraymulticol{3}{l}{\sum_{\mathcal{K}\in \collectionS_k}
  R_{\mathcal{K}}\le} \nonumber \\
  \log\det\biggl(\Id + \Bigl(\Id +\!\!\! \sum_{\mathcal{K}': \mathcal{K}'\not\ni k}
  \Hm_{\!k} \Qm_{\mathcal{K}'} \Hm_{\!k}^\H\Bigr)^{-1}
  \sum_{\mathcal{K}\in \collectionS_k} \Hm_{\!k} \Qm_{\mathcal{K}}
  \Hm_{\!k}^\H  \biggr).\IEEEeqnarraynumspace\label{eq:rate_con_exact}
\end{IEEEeqnarray}
The above rate constraints provide the exact characterization of the
achievable rate region for any linear precoding scheme~(not necessarily
the MMSE precoding). Note that the region is quite involved with a
large number of parameters. For our purpose, however, it is enough to
have an approximate region, i.e., to within a constant gap. This allows
us to simplify the region and obtain the following result.

\begin{lemma}\label{lemma:reduced_con}
  Let $\mathcal{{R}}_{\text{BC}}^{\text{CM}}(\{\Hm_{\!k}\}_k, P)$ be the set
  of achievable rate tuples $(R_{\mathcal{K}}:\, \mathcal{K}\subseteq[K],
  \mathcal{K}\ne\emptyset)$ by the proposed scheme with MMSE precoding satisfying the power
  constraint $P$ {and joint decoding at the receivers}. {Then, the
  set of non-negative rate tuples satisfying}
  \begin{align}
  \sum_{\mathcal{K}\in \collectionS_k} R_{\mathcal{K}} \le
  \log\det\biggl(\Id + \Hm_{\!k} \Qm_{\collectionS_k} \Hm_{\!k}^\H  \biggr),
  \label{eq:region0}
\end{align}
for all $k\in[K]$ and collections $\collectionS_k \subseteq \{\mathcal{K}:\,
\mathcal{K}\ni k\}$ {forms an achievable constant-gap rate region with
respect to $\mathcal{{R}}_{\text{BC}}^{\text{CM}}$,}
where we define for convenience
\begin{align}
  \Qm_{\collectionS_k} := 
  \sum_{\mathcal{K}\in \collectionS_k} \left(
   P^{-1}\Id + \Hm_{\!\bar{\mathcal{K}}}^\H \Hm_{\!\bar{\mathcal{K}}}
   \right)^{-1}.  \label{eq:def-Qs}
\end{align}
\end{lemma}

Note that in the above simplification, we have omitted the interference
term and replaced $P_{\mathcal{K}}$ by $P$ in $\Qm_{\mathcal{K}}$,
both of which only incur a bounded power loss in terms of $K$, but can
simplfy further analyses. The number of
constraints in the above region corresponds to the number of non-empty collections
$\collectionS_k$ for all $k\in[K]$.

We say that the collection $\collectionS_k$ is \emph{minimal} if no element is a proper subset of another
element. {For example, $\{ \{1\}, \{1,2\}, \{1,3\}\}$ as
$\collectionS_1$ is not minimal since
$\{1\}\subset \{1,2\}$ and $\{1\}\subset \{1,3\}$. One can always obtain a minimal collection by
removing the ``smaller'' elements, e.g., removing $\{1\}$ in the
previous example and we obtain $\{ \{1, 2\}, \{1,3\}\}$ that is minimal.} We
say that $\collectionS_k$ can be \emph{reduced} to a minimal collection
denoted by $\underline{\collectionS}_k$. It is readily shown that
{$|{\collectionS}_k| \Qm_{\underline{\collectionS}_k}  \succeq \Qm_{\collectionS_k} \succeq  \Qm_{\underline{\collectionS}_k}$} 
which we denote as
\begin{align}
\Qm_{\underline{\collectionS}_k} \!\! \approx \Qm_{{\collectionS}_k}.
\label{eq:minimal}
\end{align}
Therefore, we can replace $\Qm_{\collectionS_k}$ by
$\Qm_{\underline{\collectionS}_k}$ and only lose up to a constant number of
{bits per channel use}. Further, we notice that if both collections $\collectionS'_k$ and
$\collectionS''_k$ can be reduced to $\underline{\collectionS}_k$, then
$\collectionS'_k \bigcup \collectionS''_k$ can also be reduced to
$\underline{\collectionS}_k$. Hence, in the equivalent class of
collections sharing the same minimal $\underline{\collectionS}_k$, there is always a \emph{maximal}
collection that is the union of all the collections that can be reduced to
$\underline{\collectionS}_k$. It follows that for every collection
$\collectionS_k$, there is a minimal $\underline{\collectionS}_k$ and a
maximal $\overline{\collectionS}_k$ such that
\begin{align}
  \underline{\collectionS}_k \subseteq \collectionS_k \subseteq
  \overline{\collectionS}_k.
\end{align}%
For instance, when $K=2$, there are three possible collections for
$\collectionS_1$, namely, $\{\{1\}\}$, $\{\{1,\! 2\}\}$, and $\{\{1\},\!
\{1,\! 2\}\}$. Similar collections can be found for $\collectionS_2$. Note that $\{\{1,\! 2\}\}$
is both a $\collectionS_1$ and a $\collectionS_2$.
The three collections for $\collectionS_1$ can be divided into two
classes according to the minimal/maximal collection pairs as follows.
\renewcommand{\arraystretch}{1.2}
\begin{center}
  \begin{tabular}{|c|c|c|}
    \hline
    $\collectionS_1$ &  $\underline{\collectionS}_1$ &
    {$\overline{\collectionS}_1$} \\ \hline
$\{\{1\}\}$ & $\{\{1\}\}$ & $\{\{1\}\}$ \\ \hline
{$\{\{1,\! 2\}\}$} & \multirow{2}{*}{$\{\{1,\! 2\}\}$} &
\multirow{2}{*}{$\{\{1\},\! \{1,\! 2\}\}$} \\ \cline{1-1}
$\{\{1\},\! \{1,\! 2\}\}$ &  &  \\ \hline
  \end{tabular}
\end{center}
\renewcommand{\arraystretch}{1}
In the expression~\eqref{eq:region0}, we see that
among all the constraints with $\collectionS_k$ having the same
$\underline{\collectionS}_k$, thus having the same right hand side in
\eqref{eq:region0} up to a constant gap due to \eqref{eq:minimal},
the constraint corresponding to
$\overline{\collectionS}_k$ is obviously dominant since it involves all the
possible terms on the left hand side. Therefore, we can further simplify
the approximate rate region.
\begin{proposition} \label{pro:rate_region_common}
  The set $\CG{\mathcal{R}}_{\text{BC}}^{\text{CM}}(\{\Hm_{\!k}\}_k,
  P)$ of non-negative common message rate tuples satisfying, for all
  $k\in[K]$ and $\collectionS_k\subseteq\{\mathcal{K}:\,\mathcal{K}\ni
  k\}$, the rate constraints
  \begin{align}
  \sum_{\mathcal{K}\in \overline{\collectionS}_k} R_{\mathcal{K}} \le
  {l_k^{\underline{\collectionS}_k}} \label{eq:region1}
\end{align}
  is an achievable constant-gap rate region by the proposed scheme;
  here we define 
  \begin{align}
    l_k^{\collectionS} := \log\det\bigl(\Id + \Hm_{\!k} \Qm_{{\collectionS}} \Hm_{\!k}^\H
    \bigr). \label{eq:l_S} 
  \end{align}%
  
\end{proposition}

\begin{example}[The two-user case] \label{ex:2user}
When $K=2$, the rate region from
Proposition~\ref{pro:rate_region_common} becomes 
\begin{gather}
  \tilde{R}_1, \tilde{R}_{12}^{(1)}, \label{eq:ex1}
  \tilde{R}_2, \tilde{R}_{12}^{(2)} \ge 0, \\
  \tilde{R}_1 \le l_1^{\{1\}},  
  \tilde{R}_2 \le l_2^{\{2\}},\\ 
  \tilde{R}_1 + \tilde{R}_{12}^{(1)} + \tilde{R}_{12}^{(2)} \le
  l_1^{\{1,2\}},\\ 
  \tilde{R}_2 + \tilde{R}_{12}^{(1)} + \tilde{R}_{12}^{(2)} \le
  l_2^{\{1,2\}},  \label{eq:ex4} 
\end{gather}
where $l_k^{\{k\}} := \log\det\left( \Id + \Hm_k \Qm_{\{k\}} \Hm_k^\H
   \right)$ and  $l_k^{\{1,2\}} := \log\det\left( \Id + P \Hm_k \Hm_k^\H
   \right)$, for $k=1,2$; $\Qm_{\{1\}} := (P^{-1}\Id + \Hm_2^\H
   \Hm_2)^{-1}$ and $\Qm_{\{2\}} := (P^{-1}\Id + \Hm_1^\H \Hm_1)^{-1}$.
\end{example}

\subsection{$K$-user BC {without common} messages: Rate-splitting}

Now, let us get back to the original setting {without common}
messages, {i.e., with messages }
$\{\rv{M}_k:\,k=1,\ldots,K\}$, {each one intended exclusively to
one user. We can build a scheme without common messages from any scheme
with common messages through rate-splitting. }

First, we split each message $\rv{M}_k$ of rate $R_k$ {bits/s/Hz} into
sub-messages $\{\tilde{\rv{M}}^{(k)}_{\mathcal{K}}:\,\mathcal{K}\ni
k\}$, each of rate $\tilde{R}^{(k)}_{\mathcal{K}}$ such that
$\sum_{\mathcal{K}\ni k} \tilde{R}^{(k)}_{\mathcal{K}} = R_k$,
$\forall\,k$. By construction, each {sub-message}
$\tilde{\rv{M}}^{(k)}_{\mathcal{K}}$ should be decoded by all users in
$\mathcal{K}$, although the {sub-message} is intended only to user $k$.

Then, the $2^{K-1} K$ sub-messages $\bigl\{\tilde{\rv{M}}^{(k)}_{\mathcal{K}}:\
k\in\mathcal{K}\subseteq[K]\bigr\}$ are re-assembled into $2^K-1$
{sub-messages}
\begin{align}
  \tilde{\rv{M}}_{\mathcal{K}} := \{
  \tilde{\rv{M}}^{(k)}_{\mathcal{K}}:\,k\in\mathcal{K}\},
\end{align}
each one of which should be decodable by the users in $\mathcal{K}$ by
construction. These $2^K-1$ re-assembled {sub-messages} are transmitted with
the scheme described in the previous subsection. At the receivers' side,
each user $k$ decodes the set of re-assembled {sub-messages}
$\{\tilde{\rv{M}}_{\mathcal{K}}:\, \mathcal{K}\ni k\}$, but only keeps
the sub-messages $\{\tilde{\rv{M}}^{(k)}_{\mathcal{K}}:\,\mathcal{K}\ni
k\}$ in order to reconstruct the desirable message $\rv{M}_k$.
At this point, the following result becomes straightforward.
\begin{proposition}\label{pro:rate_region}
  A {rate tuple} $(R_1,\ldots,R_K)$ is achievable if there
  exists a set of {sub-message} rates $\bigl\{\tilde{R}^{(k)}_{\mathcal{K}}\!:
  k\in\mathcal{K}\subseteq[K]\bigr\}$ such that
  \begin{gather}
    \sum_{\mathcal{K}\ni k} \tilde{R}^{(k)}_{\mathcal{K}} = R_k, \
    \forall k\in[K],\\
    \sum_{k \in \mathcal{K}} \tilde{R}^{(k)}_{\mathcal{K}} =
    \tilde{R}_{\mathcal{K}}, \ \forall \mathcal{K}\subseteq[K], \\
    \text{and } \bigl(\tilde{R}_{\mathcal{K}}:\,
    \mathcal{K}\subseteq[K]\bigr) \in
    \mathcal{R}_{\text{BC}}^{\text{CM}}(\{\Hm_{\!k}\}_k, P).
    \label{eq:tmp321}
  \end{gather}
The set of such rate tuples is denoted by
$\mathcal{R}_{\text{BC}}^{\text{RS}}(\{\Hm_{\!k}\}_k,P)$. {Replacing the
rate region $\mathcal{R}_{\text{BC}}^{\text{CM}}$ in \eqref{eq:tmp321} by the constant-gap
rate region $\CG{\mathcal{R}}_{\text{BC}}^{\text{CM}}$, we obtain an
achievable constant-gap rate region $\CG{\mathcal{R}}_{\text{BC}}^{\text{RS}}(\{\Hm_{\!k}\}_k,P)$ with respect to
$\mathcal{R}_{\text{BC}}^{\text{RS}}(\{\Hm_{\!k}\}_k,P)$.} 
\end{proposition}

It is worth emphasizing the three choices that we have made for the above RS
scheme: 1)~independent codebooks for different sets of
{sub-messages},
2)~linear spatial MMSE precoding at the transmitter, and 3)~decoding
common interfering streams by treating other streams as noise at the
receivers.
Since the proposed scheme allows the receivers to decode partially the
interference, it can be regarded as a receiver-side interference
mitigation scheme.

Finally, it is possible to ignore a subset
$\mathcal{T}\subseteq[K]$ of users and only apply the proposed
RS scheme to the remaining users in $[K]\setminus\mathcal{T}$. Together
with time sharing, the achievable rate region is described as follows.
\begin{corollary}\label{cor:TS}
  The following convex hull of rate-tuples is achievable with the proposed RS scheme and time sharing
  \begin{IEEEeqnarray}{rCl}
    \Conv\bigcup_{\mathcal{T}\subseteq[K]}\!\!\!\left\{
    R_{\mathcal{T}} = 0,
     R_{[K]\setminus\mathcal{T}} \in
     \mathcal{{R}}_{\text{BC}}^{\text{RS}}(\{\Hm_{\!k}\}_{k\in[K]\setminus\mathcal{T}},P)
    \right\}.\IEEEeqnarraynumspace
  \end{IEEEeqnarray}
  {Similarly, replacing $\mathcal{{R}}_{\text{BC}}^{\text{RS}}$ by
  $\CG{\mathcal{{R}}}_{\text{BC}}^{\text{RS}}$, we obtain an achievable
  constant-gap region.}
\end{corollary}

In the following, we shall only focus on constant-gap rates for our
purpose. For brevity, we drop the term ``constant-gap'' whenever
confusion is not likely.

\section{Achievable Sum Rate} \label{sec:sum-rate}

In general, the achievable rate region is analytically intractable. 
Even the numerical evaluation becomes hard for a moderately large number
of users due to the exponential growth of the number of
sub-messages and doubly exponential growth of the number of
constraints. In this section, we shall focus on the achievable
sum rate instead of the entire rate region to obtain meaningful insights. 
First, we shall establish an upper bound on the achievable sum rate
of the proposed region. We shall then show how to preserve the achievable sum rate up to a constant loss while reducing the total number of active streams.

\subsection{Sum rate upper bound}\label{sec:Kuser}

Before analyzing the sum rate, we take a closer look at the
defining term $l_k^{\collectionS}:= \log\det\bigl(\Id + \Hm_{\!k} \Qm_{{\collectionS}} \Hm_{\!k}^\H
    \bigr)$ of the achievable rate
region in Proposition~\ref{pro:rate_region_common}. 
First, note that $l_k^{\collectionS}$ is increasing with respect to the partial
ordering of the collection $\collectionS$. Indeed, when
$\collectionS\supseteq \collectionS'$, we have $\Qm_{\collectionS}
\succeq \Qm_{\collectionS'}$ according to the
definition~\eqref{eq:def-Qs}, implying $l_k^{\collectionS} \ge
l_k^{\collectionS'}$. Similarly, for $\collectionS =
\{\mathcal{K}\}$ and $\collectionS' = \{\mathcal{K}'\}$ with $\mathcal{K}\supseteq
\mathcal{K}'$, we have $l_k^{\collectionS} \ge l_k^{\collectionS'}$. 
For convenience, we also define 
\begin{align}
  C_{\mathcal{K}} &:=  \log\det\left( \Id + P \Hm_{\mathcal{K}}
  \Hm_{\mathcal{K}}^\H \right), \quad \forall\,\mathcal{K}\subseteq [K],
  \label{eq:def-C}
\end{align}
where we let $C_{\emptyset} := 0$. Finally, the following relationship between
the $C$'s and the $l$'s will be useful.  
\begin{lemma}\label{lemma:l-C}
  For each $k\in[K]$ and each collection $\collectionS$ of subsets of
  $[K]$, we have 
  \begin{align}
    l_k^{\collectionS} &\ge \max_{\mathcal{K} \in \collectionS}
    C_{\{k\}\cup\bar{\mathcal{K}}} - C_{\bar{\mathcal{K}}}, 
  \end{align}%
  where equality holds when $|\collectionS|=1$. 
\end{lemma}
\begin{proof}
  Since from the definition \eqref{eq:def-Qs} $\Qm_{\collectionS} := \sum_{\mathcal{K}\in\collectionS} \bigl(P^{-1}\Id +
  \Hm_{\bar{\mathcal{K}}}^\H \Hm_{\bar{\mathcal{K}}} \bigr)^{-1} \succeq \bigl(P^{-1}\Id +
  \Hm_{\bar{\mathcal{K}}}^\H \Hm_{\bar{\mathcal{K}}} \bigr)^{-1}$,
  $\forall$ $\mathcal{K}\in\collectionS$, we have   
$l_k^{\collectionS} := \log\det\left( \Id + \Hm_k \Qm_{\collectionS}
  \Hm_k^\H \right) \ge \log\det\left( \Id + \Hm_k \bigl(P^{-1}\Id +
  \Hm_{\bar{\mathcal{K}}}^\H \Hm_{\bar{\mathcal{K}}} \bigr)^{-1} 
  \Hm_k^\H \right) =  C_{\{k\}\cup\bar{\mathcal{K}}} -
  C_{\bar{\mathcal{K}}}$. Equality holds when $|\collectionS|=1$. 
\end{proof}

\begin{proposition}\label{prop:sumrate}
{The maximum sum rate of the RS scheme with $K$ active users is upper bounded by}
\begin{multline}
  { \min }\Biggl\{  {C_{[K]},}  \sum_{i=1}^K \left(
  \frac{l_i^{(K)}}{K-1} + \sum_{k=1}^{K-2} \frac{l_i^{(k)}}{k(k+1)}
  \right)  \\
  + \frac{1}{K-1}\min_{m\in[K]} \left\{ l_m^{(K-1)} - l_m^{(K)}
  \right\} \Biggr\},
  \label{eq:UB-Kuser}
\end{multline}
where 
\begin{align}
  l_i^{(k)} &:= {l_i^{\underline{\collectionS}_i^{(k)}}
  ={} } \log\det\left( \Id + \Hm_{\!i}
  \Qm_{\underline{\collectionS}_i^{(k)}} \Hm_{\!i}^\H \right),
  \label{eq:def_l} \\
  \underline{\collectionS}_i^{(k)} &:= \left\{ \mathcal{S}:\
  |\mathcal{S}|=k,\,i\in\mathcal{S} \right\},\quad
  \forall\,i\in[K], \label{eq:def-Sik} 
\end{align}
{where we recall that $\Qm_{\collectionS}$ is defined as in
\eqref{eq:def-Qs}. In particular, the upper bound is achievable to
within a constant gap when $K\le3$.}
\end{proposition}
\begin{proof}
  {The first upper bound $C_{[K]}$ is trivial since it is the
  sum capacity of the channel. Alternatively, we can also recover it
  from the region~\eqref{eq:region1}. Indeed, let us consider the
  sequence of maximal collections
  $\overline{\collectionS}_k=\{\mathcal{S}: k\in\mathcal{S},\,
  i\not\in\mathcal{S},\, \forall i>k\}$, $k\in[K]$, and their
  correponding minimal collections $\underline{\collectionS}_k=\{[k]\}$,
  $k\in[K]$. Then, the sum rate can be decomposed as
  $\sum_{k\in[K]} \sum_{\mathcal{K}\in\overline{\collectionS}_k}
  \tilde{R}_\mathcal{K} \le \sum_{k\in[K]}
  l_k^{\underline{\collectionS}_k} = \sum_{k\in[K]} C_{\{k,\ldots,K\}} -
  C_{\{k+1,\ldots,K\}} = C_{[K]}$ where the inequality is from
  \eqref{eq:region1} and the first equality is from
  Lemma~\ref{lemma:l-C}. We shall now focus on the second term in
  \eqref{eq:UB-Kuser}.
  }

  For given $i,k\in[K]$, let us consider the collection
  $\underline{\collectionS}_i^{(k)}$ as defined in \eqref{eq:def-Sik}. It
  is clearly a minimal collection according to the definition in
  Section~\ref{sec:RS1}, since no element is a proper subset of another
  element. Now, let us define the following collection
  \begin{equation}
  \overline{\collectionS}_i^{(k)} := \left\{ \mathcal{S}:\
  1\le|\mathcal{S}|\le k,\,i\in\mathcal{S} \right\}.
  \label{eq:def-Sik-max} 
  \end{equation}
  We notice that $\overline{\collectionS}_i^{(k)} \supseteq
  \underline{\collectionS}_i^{(k)}$ is a maximal collection according to the definition in
  Section~\ref{sec:RS1}. From \eqref{eq:region1}, we have 
  \begin{equation}
    \sum_{\mathcal{K}\in \overline{\collectionS}_i^{(k)}} R_{\mathcal{K}}
    \le \log\det\biggl(\Id + \Hm_{\!k}
    \Qm_{\underline{\collectionS}_i^{(k)}} \Hm_{\!k}^\H \biggr) =
    l_i^{(k)}. \label{eq:tmp121}
  \end{equation}
  Letting $k=K$ in \eqref{eq:tmp121}, we obtain
  \begin{equation}
    \underbrace{
    R_{[K]} + \sum_{k'=1}^{K-1}
    \sum_{\mathcal{K}\in\underline{\collectionS}_i^{(k')}}
    R_{\mathcal{K}}}_{a_i} \le l_i^{(K)}, \label{eq:ai}
  \end{equation}
  while letting $k=K-1$ in \eqref{eq:tmp121}, we have
  \begin{equation}
    \underbrace{
    \sum_{k'=1}^{K-1} \sum_{\mathcal{K}\in\underline{\collectionS}_i^{(k')}}
    R_{\mathcal{K}}}_{b_i} \le l_i^{(K-1)}.\label{eq:bi} 
  \end{equation}
  Next, we sum up $a_i$ and $b_i$ as follows.
  \begin{align}
    \sum_{i=1}^{K-1} a_i + b_K &= (K-1)R_{[K]} + \sum_{i=1}^K
    \sum_{k'=1}^{K-1}\sum_{\mathcal{K}\in
    \underline{\collectionS}_i^{(k')}} R_{\mathcal{K}} \label{eq:tmp111} \\
    &= (K-1)R_{[K]} + \sum_{k'=1}^{K-1} \sum_{i=1}^K
    \sum_{\mathcal{K}\in \underline{\collectionS}_i^{(k')}} R_{\mathcal{K}} \\
    &= (K-1)R_{[K]} + \sum_{k'=1}^{K-1} k' \sum_{\mathcal{K}:\,|\mathcal{K}|=k'} R_{\mathcal{K}} \\
    &= (K-1)R^{(K)} + \sum_{k'=1}^{K-1} k' R^{(k')},
  \end{align}
  where in the second equality we exchange the sums over $i$ and
  $k'$; in the third one we apply the symmetry and rearrange the
  $K\times\binom{K-1}{k'-1}$ summands into $k'$ summations over
  $\binom{K}{k'}$ terms; in the last one we define $R^{(k)} :=
  \sum_{\mathcal{K}:\,|\mathcal{K}|=k} R_{\mathcal{K}}$. Summing up the
  right-hand sides of \eqref{eq:ai} and \eqref{eq:bi} in the same way,
  we obtain
  \begin{align}
    (K-1)R^{(K)} + \sum_{k'=1}^{K-1} k' R^{(k')} &\le \sum_{i=1}^{K-1}
    l_i^{(K)} + l_K^{(K-1)}. \label{eq:tmp732} 
  \end{align}
  Similarly, for any $k\le K-2$, we can apply \eqref{eq:tmp121} and
  obtain
  \begin{equation}
    \sum_{k'=1}^{k} \sum_{\mathcal{K}\in\underline{\collectionS}_i^{(k')}} R_{\mathcal{K}} \le l_i^{(k)}.\label{eq:ci} 
  \end{equation}
  Summing over $i$, we have 
  \begin{equation}
    \sum_{i=1}^K \sum_{k'=1}^{k}
    \sum_{\mathcal{K}\in\underline{\collectionS}_i^{(k')}}
    R_{\mathcal{K}} = \sum_{k'=1}^{K-1} k' R^{(k')}.
  \end{equation}
  Since $\sum_{i=1}^K \sum_{k'=1}^{k}
    \sum_{\mathcal{K}\in\underline{\collectionS}_i^{(k')}}
    R_{\mathcal{K}} = \sum_{k'=1}^{K-1} k' R^{(k')}$,  we have
  \begin{equation}
    \underbrace{\sum_{k'=1}^{k} k' R^{(k')}}_{c_k} \le \sum_{i=1}^{K} l_i^{(k)}, \quad
    k=1,\ldots,K-2. \label{eq:tmp642}
  \end{equation}
  Consider the following weighted sum over $c_k$
  \begin{align}
    \sum_{k=1}^{K-2} \frac{K-1}{k(k+1)} c_k &= \sum_{k=1}^{K-2}
    \frac{K-1}{k(k+1)} \sum_{k'=1}^{k} k' R^{(k')} \\
    &= \sum_{k'=1}^{K-2} k' R^{(k')} \sum_{k=k}^{K-2} \frac{K-1}{k(k+1)} \\
    &= \sum_{k'=1}^{K-2} (K-1-k') R^{(k')},
  \end{align}
  and the same weighted sum over the right-hand side of
  \eqref{eq:tmp642}, we obtain
  \begin{align}
   \sum_{k'=1}^{K-2} (K-1-k') R^{(k')} \le \sum_{k=1}^{K-2}
   \sum_{i=1}^{K} \frac{K-1}{k(k+1)} l_i^{(k)}. \label{eq:tmp901}
  \end{align}
  Summing up \eqref{eq:tmp732} and \eqref{eq:tmp901}, and dividing both
  sides by $K-1$, we have 
  \begin{IEEEeqnarray}{rCl}
    {R}^{(K)}_{\text{sum}} &\le& \sum_{i=1}^K \left(
    \frac{l_i^{(K)}}{K-1} + \sum_{k=1}^{K-2} \frac{l_i^{(k)}}{k(k+1)}
    \right)\nonumber \\ &&+\> \frac{1}{K-1} \left( l_K^{(K-1)} - l_K^{(K)} \right),
  \end{IEEEeqnarray}
  where ${R}^{(K)}_{\text{sum}} := \sum_{i=1}^K R^{(i)}$. 
  Note that in \eqref{eq:tmp111}, we can also start with $\sum_{i=1,i\ne
  m}^K a_i + b_m$
  instead of $\sum_{i=1}^{K-1} a_i + b_K$, for any $m=1,\ldots,K$.
  Repeating the same steps, we can obtain 
  \begin{IEEEeqnarray}{rCl}
    {R}^{(K)}_{\text{sum}} &\le& \sum_{i=1}^K \left(
    \frac{l_i^{(K)}}{K-1} + \sum_{k=1}^{K-2} \frac{l_i^{(k)}}{k(k+1)}
    \right)\nonumber\\
    &&+\> \frac{1}{K-1} \left( l_m^{(K-1)} - l_m^{(K)} \right),
  \quad m\in[K]. 
  \end{IEEEeqnarray}
  {Since the above upper bound holds for every rate tuple, it is also
  an upper bound on the maximum sum rate. This concludes the proof of
  the upper bound \eqref{eq:UB-Kuser}. We defer the proof of the
  achievability for $K\le3$ in Appendix~\ref{app:prop5}.} 
\end{proof}
In Table~\ref{tab:gap}, we provide the numerical evaluation of the upper
bounds and the sum rate of the region~\eqref{eq:region1} for $K=4$ and
$5$ users.\footnote{With more users, numerical simulations become
infeasible due to a large amount of constraints, up to the order
$2^{2^{K-1}}$.} Two channel models are considered: Rayleigh fading and
the one-ring model~\cite{adhikary2013joint}.  Average rates are obtained
with $1000$ channel realizations for each distribution, while maximum
gap is from all $2000$ realizations.  We observe that the upper bound is
indeed very close to the exact rate in average, and the maximum gap is
small as compared to the average rate. 
\begin{table*}[t]
  \caption{{Numerical results comparing the derived upper
  bound~\eqref{eq:UB-Kuser} and the
  exact RS sum rate of the region~\eqref{eq:region1}~(in bits/s/Hz) for
  $P=40~\mathrm{dB}$. The transmitter has $\nt=6$ antennas. } }
  {
\begin{center}
\begin{tabular}{|c|c|c|c|c|c|}
    \hline
    \multirow{2}{*}{}&\multicolumn{2}{c|}{Rayleigh, average}
    &\multicolumn{2}{c|}{One ring, average} & \multirow{2}{*}{maximum
    gap} \\ \cline{2-5}
    &upper bound & simulation &upper bound & simulation &  \\ \hline
    $K=4$ & 64.7517&64.7404 &31.4040 &31.0629 & 0.7236\\ \hline
    $K=5$ & 80.5573&80.5372 &33.5408 &33.1612 &1.8441 \\ \hline
\end{tabular}
\end{center}
\label{tab:gap}}
\end{table*}

\begin{remark}
  Note that the upper bound in \eqref{eq:UB-Kuser} holds when all $K$
  users are active. Nevertheless, one can ignore $K-K'$ users and apply the
RS scheme to the $K'$ active users for any $K'\le K$. In this case, the
above bound is still valid by replacing $K$ with $K'$ and replacing
$\Hm$ with the corresponding submatrix. 
\end{remark}

\subsection{Stream elimination and stream ordering algorithms}\label{sec:Algo}

The general RS scheme transforms $K$ {messages} into
$2^K-1$ different {sub-messages}, and then the BS creates one
stream for each {sub-message} aiming at the corresponding user
group. In practice, we would like to reduce the number of streams for 
lower signaling and decoding complexity. 
In this section, we first propose an algorithm that eliminates some of
the streams without reducing the sum rate for more than a given number
of bits {per channel use}. Then, based on the same idea, we propose a second algorithm
that orders all the $2^K-1$ streams, and validate the algorithm through
numerical simulation. For simplicity of demonstration,
we focus on the MISO case with $M\ge K$ in the following.  Nevertheless,
the results can {be} extended to multi-antenna receivers straightforwardly. 

\subsubsection{Sufficient conditions to maintain the sum rate}

For convenience, let us introduce some notations first. 
Let $\Hm^\dag$ be the Moore-Penrose inverse\footnote{Namely, $\Hm^\dag$ is such that 
$\Hm^\dag\Hm\Hm^\dag = \Hm^\dag$, $\Hm\Hm^\dag\Hm = \Hm$, and both
$\Hm^\dag\Hm$ and
$\Hm\Hm^\dag$ are Hermitian.} of $\Hm$. 
For any $\Ic\subseteq [K]$, let $\Hm_{\Ic} \in \mathbb{C}^{i\times M}$, with $i=|\Ic|$, denote 
the submatrix of $\Hm$ containing the rows with indices in
$\Ic$; similarly, let $\Hm^\dag_{\Ic} \in \mathbb{C}^{M\times i}$ denote the
submatrix of $\Hm^\dag$ containing the columns with indices in $\Ic$.

\begin{proposition}\label{prop:elim}
  For any $\Ic\subset [K]$ with $1\le|\Ic|<K$, if there exists some $\Wm
  \in\mathbb{C}^{M\times M}$ with $\|\Wm\|^2 \le c$ such that
  \begin{equation}
    \Hm_{\!\Ic} \Wm = \Hm_{\!\Ic}, \quad \text{and}\quad
    \Hm_{\!\bar{\Ic}} \Wm = \zerov, \label{eq:suff_cond}
  \end{equation}
  then we can eliminate any stream $\mathcal{K}$ such that
  $\mathcal{K}\cap \Ic \ne \emptyset$ and 
  $\mathcal{K}\cap \bar{\Ic} \ne \emptyset$ 
  without losing more than $\log(c)$ bits. 
\end{proposition}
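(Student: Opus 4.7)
The plan is to convert any achievable rate tuple of the full RS scheme into one for the reduced scheme (without cross streams) whose sum rate is smaller by at most $\log(c)$ bits. The matrix $\Wm$ of \eqref{eq:suff_cond}, together with its complement $\Id-\Wm$, is a bounded-gain decoupling operator between the row spaces of $\Hm_{\!\Ic}$ and $\Hm_{\!\bar{\Ic}}$: from \eqref{eq:suff_cond} we immediately get $\Hm_{\!\Ic}(\Id-\Wm)=\zerov$ and $\Hm_{\!\bar{\Ic}}(\Id-\Wm)=\Hm_{\!\bar{\Ic}}$, so $\Wm$ and $\Id-\Wm$ are mutually orthogonal w.r.t. the two user groups up to a gain controlled by $c$. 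I will use this decoupling to split each cross stream into two substreams, each serving only one side of the partition.

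Concretely, for every cross stream $\mathcal{K}$ with covariance $\Qm_{\mathcal{K}}$, I would introduce (i) a substream re-indexed as $\mathcal{K}\cap\Ic$ with covariance $\Wm\Qm_{\mathcal{K}}\Wm^\H$ and carrying $\{\tilde{\rv{M}}_{\mathcal{K}}^{(k)}:\,k\in\mathcal{K}\cap\Ic\}$, and (ii) a substream re-indexed as $\mathcal{K}\cap\bar{\Ic}$ with covariance $(\Id-\Wm)\Qm_{\mathcal{K}}(\Id-\Wm)^\H$ and carrying $\{\tilde{\rv{M}}_{\mathcal{K}}^{(k)}:\,k\in\mathcal{K}\cap\bar{\Ic}\}$. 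These substreams are superposed with the existing non-cross streams of the same index. The key identities $\Hm_k\Wm=\Hm_k$ for $k\in\Ic$ and $\Hm_k(\Id-\Wm)=\Hm_k$ for $k\in\bar{\Ic}$ ensure that the received signal covariance at each intended receiver remains exactly $\Hm_k\Qm_{\mathcal{K}}\Hm_k^\H$, so every sub-message keeps its previous rate. The complementary identities $\Hm_k\Wm=\zerov$ for $k\in\bar{\Ic}$ and $\Hm_k(\Id-\Wm)=\zerov$ for $k\in\Ic$ show that the substreams do not leak into the opposite group, so the effective interference structure is preserved.

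Admissibility of the new rate allocation then follows from the original constraints \eqref{eq:region1}: for any user $k\in\Ic$ and target index $\mathcal{K}'\subseteq\Ic$ with $k\in\mathcal{K}'$, the new constraint on the aggregated stream has right-hand side $\log\det\!\bigl(\Id+\Hm_k\bigl(\Qm_{\mathcal{K}'}+\sum\Wm\Qm_{\mathcal{K}}\Wm^\H\bigr)\Hm_k^\H\bigr)$, which by the decoupling identity equals $\log\det\!\bigl(\Id+\Hm_k\Qm_{\mathcal{K}'}\Hm_k^\H+\sum\Hm_k\Qm_{\mathcal{K}}\Hm_k^\H\bigr)$ and is dominated (up to an $O(1)$ gap) by the original joint constraint indexed by the collection $\{\mathcal{K}',\mathcal{K}_1,\mathcal{K}_2,\dots\}$ containing the reshaped cross streams. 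The only remaining cost is power: the two substreams have total trace at most $(\|\Wm\|^2+\|\Id-\Wm\|^2)\trace(\Qm_{\mathcal{K}}) \le (c+(1+\sqrt{c})^2)\trace(\Qm_{\mathcal{K}})$, i.e., a constant multiple of $c\,\trace(\Qm_{\mathcal{K}})$. Aggregating over all cross streams and rescaling the total input to meet the sum-power $P$ yields an additive sum-rate penalty of $\log(c)$ (with absolute constants absorbed into the constant-gap framework), establishing the claim.

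The main obstacle I anticipate is the combinatorial bookkeeping of the rate constraints in \eqref{eq:region1}. Every non-cross stream may inherit extra rate contributions from many eliminated cross streams, and one must verify that each collection-based constraint $\underline{\collectionS}_k$ at every user remains feasible under the aggregated covariance. The decoupling identities reduce this to a straightforward monotonicity statement for $\log\det(\Id+\Hm_k\Qm\Hm_k^\H)$ in $\Qm$, together with the minimal/maximal collection approximation of \eqref{eq:minimal}; but the exhaustive enumeration of collections, and the careful matching of new constraints to original ones, is where the real effort of the proof will be concentrated.
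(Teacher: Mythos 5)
Your core mechanism is the same as the paper's: the identities $\Hm_{\!\Ic}\Wm=\Hm_{\!\Ic}$ and $\Hm_{\!\bar{\Ic}}\Wm=\zerov$ let you re-precode a cross stream so that it is confined to one side of the partition without changing what the intended receivers see. Where you diverge is in splitting each cross stream $\Kc$ into \emph{two} substreams via $\Wm$ and $\Id-\Wm$, so as to keep every sub-message $\tilde{\rv{M}}^{(k)}_{\Kc}$ in place. The paper does something strictly simpler: it applies only $\Wm$, i.e., replaces $\xv_{\Kc}$ by $\Wm\xv_{\Kc}$, so the stream becomes invisible to the users in $\bar{\Ic}$ and is downgraded to a common message for $\Kc\cap\Ic$; since in the RS framework the rate $R_{\Kc}$ can be reassigned freely among the users that decode the stream, the sum rate is untouched by this reassignment. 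The one-sided version buys two things your construction gives up. First, the power inflation is exactly $\|\Wm\|^2\le c$, which yields the stated loss of $\log(c)$; your two substreams pay $\|\Wm\|^2+\|\Id-\Wm\|^2\le c+(1+\sqrt{c})^2$, so you only obtain $\log(c)+O(1)$ --- note that for $c=1$ the proposition promises zero loss while your bound exceeds two bits. Second, and more importantly, the paper's argument operates at the level of the transmitted and received signals: users in $\Ic$ see literally the same signal as before, and users in $\bar{\Ic}$ see strictly less interference and have strictly fewer messages to decode, so no re-verification of the rate-region constraints \eqref{eq:region1} is needed at all. The ``combinatorial bookkeeping'' that you identify as the real effort of your proof is an artifact of routing the argument through the approximate rate region; it is entirely avoidable, and avoiding it also removes the additional $O(1)$ gaps that the minimal/maximal collection machinery would inject into the $\log(c)$ bound.
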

\begin{proof}
  Let $i=|\Ic| \in [1:K-1]$ and let us assume that $\Ic =
  \{1,\ldots,i\}$ without loss of generality. The case with an arbitrary
  $\Ic$ follows in the same way up to a row permutation of $\Hm$.  Let
  $\xv_{\mathcal{K}}$ be the signal corresponding to the stream intended
  for user group $\mathcal{K}$, and the corresponding
  {sub-message} and rate
  are denoted by $M_{\mathcal{K}}$ and $R_{\mathcal{K}}$, respectively.
  Further, let us assume that $\mathcal{K} \cap
  \Ic \ne \emptyset$ and $\mathcal{K} \cap \bar{\Ic} \ne \emptyset$,
  i.e., $\mathcal{K}$ contains at least one user
  inside $\Ic$ and at least one outside of it. Without loss
  of generality, let $\mathcal{K} = \mathcal{K}'\cup\mathcal{K}''$ with
  $\mathcal{K}'\subseteq \Ic$ and $\mathcal{K}''\subseteq \bar{\Ic}$. 

  Let $\Wm \in\mathbb{C}^{M\times M}$ be such that  
  \eqref{eq:suff_cond} holds. Then, we 
  define a new signal corresponding to the {sub-message}
  $M_{\mathcal{K}}$ as $\xv_{\mathcal{K}}' = \Wm \xv_{\mathcal{K}}$. 
  Due to the condition \eqref{eq:suff_cond}, the received signal
  corresponding to the {sub-message} $M_{\mathcal{K}}$ at users
  in $\bar{\Ic}$ is $\Hm_{\!\bar{\Ic}} \xv_{\mathcal{K}}' =
  \Hm_{\!\bar{\Ic}} \Wm \xv_{\mathcal{K}} = \zerov$,
  while the received signal at users in $\Ic$ is $\Hm_{\!{\Ic}} \xv_{\mathcal{K}}' =
  \Hm_{\!{\Ic}} \Wm \xv_{\mathcal{K}} = \Hm_{\!{\Ic}}
  \xv_{\mathcal{K}}$, i.e., remains the same as with
  $\xv_{\mathcal{K}}$.  Hence, with the new signaling scheme, users in
  $\Ic$ see no changes, and  
  users in
  $\bar{\mathcal{I}}$ do not receive any signal related to the
  {sub-message}
  $M_{\mathcal{K}}$. 
  In other words, the {sub-message} $M_{\mathcal{K}}$ can be
  downgraded to a {sub-message} to
  users in $\mathcal{K}' = \Kc\cap\Ic$ without degrading the decoding performance of other
  users. 

  Next, we evaluate the power loss. 
  \begin{IEEEeqnarray}{rCl}
    \|\xv_{\mathcal{K}}'\|^2 &=  \| \Wm {\xv}_{\Kc} \|^2 \le
    \sigma_{\max}^2(\Wm) \| \xv_{\Kc} \|^2
    \le \|\Wm\|^2  \|\xv_{\Kc}\|^2, \IEEEeqnarraynumspace
  \end{IEEEeqnarray}
  where $\sigma_{\max}$ denotes the maximum singular value of a matrix, 
  with $\sigma_{\max}(\Wm)\le \|\Wm\|$. 
  Next we scale down the power of $\xv'_{\mathcal{K}}$ to meet the power constraint, namely, we let $\xv''_{\mathcal{K}} :=
  \frac{1}{\|\Wm\|} \xv'_{\mathcal{K}}$. Note that scaling down the
  power by a factor $\|\Wm\|^2$, we have a rate loss on
  $R_{\mathcal{K}}$ of at most $\log(\|\Wm\|^2)$ {bits/s/Hz}. 
  Since decreasing the power of one stream cannot hurt the other streams, the sum rate
  loss is at most $\log(\|\Wm\|^2)\le \log(c)$ {bits/s/Hz}. The proof is complete.   
\end{proof}

\subsubsection{Stream elimination and ordering}

\begin{lemma}
If the linear system \eqref{eq:suff_cond} has at least one solution, then the solution  
$\Wm_{\!\Ic} := \Hm_{\Ic}^\dag \Hm_{\Ic}$ is the one with minimum Euclidean norm. 
\end{lemma}
\begin{proof}
  Let $\tilde{\Hm} := \left[ \begin{smallmatrix}
    \Hm_{\!\Ic}\\\Hm_{\!\bar{\Ic}}\end{smallmatrix}\right] = \pmb{\Pi}
    \Hm$ for some permutation matrix $\pmb{\Pi}$. Then, the
    condition~\eqref{eq:suff_cond} can be rewritten as  
$\tilde{\Hm} \Wm  = \left[ \begin{smallmatrix}
    \Hm_{\!\Ic}\\\zerov\end{smallmatrix}\right]$. If this equation has at least one solution,
    then it is known~\cite{HJ:85} that $\tilde{\Hm}^\dag \left[ \begin{smallmatrix}
    \Hm_{\!\Ic}\\\zerov\end{smallmatrix}\right]$ has the minimum
    Euclidean norm. Since $\tilde{\Hm}^\dag =
    {\Hm}^\dag \pmb{\Pi}^\T$, we have $\Wm = {\Hm}^\dag \pmb{\Pi}^\T \left[ \begin{smallmatrix}
      \Hm_{\!\Ic}\\\zerov\end{smallmatrix}\right] = \Hm_{\Ic}^\dag
      \Hm_{\Ic}$. 
\end{proof}

To include the case where \eqref{eq:suff_cond}
does not have any solution, let us define  
\begin{equation}
  \widetilde{\Wm}_{\!\Ic} := \begin{cases} \Hm_{\Ic}^\dag \Hm_{\Ic}, & \text{when \eqref{eq:suff_cond} has a solution}, \\
    \pmb{\infty}, & \text{otherwise},
  \end{cases} 
  \label{eq:MP}
\end{equation}
where, with a slight abuse of notation, we use $\pmb{\infty}$ to denote a
matrix with infinite norm. 
Proposition~\ref{prop:elim} has the following equivalent form. 
\begin{corollary}\label{cor:elim}
  If $\bigl\|\widetilde{\Wm}_{\!\Ic} \bigr \|^2 \le c$
  for some non-empty set $\mathcal{I}\subset[K]$ and some non-negative
  value $c$, 
  then we can eliminate any stream $\mathcal{K}$ such that 
  $\mathcal{K}\cap \mathcal{I} \ne \emptyset$ and 
  $\mathcal{K}\cap \bar{\mathcal{I}} \ne \emptyset$
 without losing more than $\log(c)$ bits. 
\end{corollary}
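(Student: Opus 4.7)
The plan is to observe that Corollary~\ref{cor:elim} is essentially a bookkeeping rephrasing of Proposition~\ref{prop:elim}, made possible by the preceding lemma on the minimum-norm solution of \eqref{eq:suff_cond}. I will first argue that the hypothesis $\bigl\|\widetilde{\Wm}_{\!\Ic}\bigr\|^2 \le c$ automatically puts us in the ``solvable'' branch of the piecewise definition \eqref{eq:MP}, and then apply Proposition~\ref{prop:elim} with an explicit choice of $\Wm$.

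Concretely, since $c$ is a finite non-negative number and $\pmb{\infty}$ has, by convention, infinite norm, the inequality $\bigl\|\widetilde{\Wm}_{\!\Ic}\bigr\|^2 \le c$ rules out the ``otherwise'' case in \eqref{eq:MP}. Hence the linear system \eqref{eq:suff_cond} must admit a solution, and by the definition in \eqref{eq:MP} we have $\widetilde{\Wm}_{\!\Ic} = \Hm_{\!\Ic}^+\Hm_{\!\Ic}$, which is precisely the minimum-Euclidean-norm solution identified in the preceding lemma. In particular this particular matrix satisfies both conditions $\Hm_{\!\Ic}\widetilde{\Wm}_{\!\Ic} = \Hm_{\!\Ic}$ and $\Hm_{\!\bar{\Ic}}\widetilde{\Wm}_{\!\Ic} = \zerov$.

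Second, I would set $\Wm := \widetilde{\Wm}_{\!\Ic}$ and invoke Proposition~\ref{prop:elim} directly. The proposition requires exactly \eqref{eq:suff_cond} together with a norm bound $\|\Wm\|^2 \le c$, both of which hold by the previous paragraph. Its conclusion, which is that any stream $\Kc$ with $\Kc\cap\Ic\ne\emptyset$ and $\Kc\cap\bar{\Ic}\ne\emptyset$ can be deactivated at the cost of at most $\log(c)$ bits PCU in the sum rate, is exactly the statement of the corollary.

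There is no substantive obstacle here; the only subtlety worth flagging is the convention of extending $\widetilde{\Wm}_{\!\Ic}$ to $\pmb{\infty}$ when \eqref{eq:suff_cond} is infeasible. Its purpose is to merge ``feasibility of \eqref{eq:suff_cond}'' and ``the norm bound controlling the power blow-up'' into the single inequality $\bigl\|\widetilde{\Wm}_{\!\Ic}\bigr\|^2 \le c$, so that the corollary gives a clean, directly checkable criterion. The hard part of the work has already been carried out upstream in Proposition~\ref{prop:elim} (where the rate-loss accounting via rescaling by $1/\|\Wm\|$ is done) and in the minimum-norm lemma (which guarantees that $\Hm_{\!\Ic}^+\Hm_{\!\Ic}$ is the best choice of $\Wm$); the corollary itself requires only a one-line application.
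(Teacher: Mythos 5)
Your proposal is correct and matches the paper's intent exactly: the paper presents the corollary as an immediate "equivalent form" of Proposition~\ref{prop:elim}, relying on the convention in \eqref{eq:MP} and the minimum-norm lemma precisely as you describe, with no further argument given. Your one additional observation — that a finite norm bound forces the solvable branch of \eqref{eq:MP}, so $\widetilde{\Wm}_{\!\Ic}=\Hm_{\Ic}^+\Hm_{\Ic}$ is a valid choice of $\Wm$ in the proposition — is the whole content of the step, and you have it right.
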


At this point, we can describe our stream elimination algorithm in
Algorithm~\ref{algo:SEA}. 
\begin{algorithm}[t]
\caption{Stream elimination algorithm.}
\label{algo:SEA}
\begin{algorithmic}
  \State \underline{Input}: channel matrix $\Hm$, threshold $c$
  \State Initialize the collections: $\pmb{\mathcal{S}}(c) = 2^{[K]}\setminus
  \emptyset$
  \State Compute $\Hm^\dag$
  \For{non-empty subset $\mathcal{I} \subset [K]$ }
  \State Compute ${\Wm}_{\!\Ic} = \Hm_{\Ic}^\dag \Hm_{\Ic}$ 
  \If{$\Wm_{\!\Ic}$ verifies \eqref{eq:suff_cond}, and
  $\|{\Wm}_{\!\mathcal{I}}\|^2\le c$}
  \State Remove all $\mathcal{K}$ such that $\mathcal{K}\cap \mathcal{I} \ne \emptyset$ and 
  $\mathcal{K}\cap \bar{\mathcal{I}} \ne \emptyset$ from the
  collection $\pmb{\mathcal{S}}(c)$ 
  \EndIf
  \EndFor
  \State \underline{Output}: $\pmb{\mathcal{S}}(c)$
\end{algorithmic}
\end{algorithm}
Note that for each of the $2^K-2$ non-empty proper
subsets $\mathcal{I}$ of $[K]$, the complexity of finding $\Wm_{\!\Ic}$ is
$O(K^2M)$, while the complexity to verify \eqref{eq:suff_cond} is
$O(K M^2)$. Therefore, the overall complexity of Algorithm~\ref{algo:SEA} is
$O(K M^2 2^K)$, since $M\ge K$.

The following property is straightforward from the algorithm. 
\begin{claim}
  The output collection from Algorithm~\ref{algo:SEA}, denoted as
  $\pmb{\mathcal{S}}(c)$, is decreasing with the
  threshold $c$ such that $\pmb{\mathcal{S}}(0) = 2^{[K]}\setminus \emptyset$
  and $\pmb{\mathcal{S}}(\infty) = [K]$. 
\end{claim}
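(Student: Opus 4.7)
The plan is to verify the three assertions in turn: (i) monotonicity in $c$, (ii) the initial value $\pmb{\mathcal{S}}(0) = 2^{[K]}\setminus \emptyset$, and (iii) the limit value $\pmb{\mathcal{S}}(\infty) = \{\{1\},\ldots,\{K\}\}$ (which is what is meant by the shorthand $[K]$ in the statement, since a ``collection'' is a set of subsets).

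For monotonicity I would simply inspect Algorithm~\ref{algo:SEA}: the only way $c$ enters the loop is through the test $\|\Wm_{\!\mathcal{I}}\|^2 \le c$. This test is monotone in $c$, so if $c_1 \le c_2$, every $\mathcal{I}$ that triggers an elimination at threshold $c_1$ also triggers one at threshold $c_2$. Hence the set of eliminated streams can only grow, giving $\pmb{\mathcal{S}}(c_2) \subseteq \pmb{\mathcal{S}}(c_1)$.

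For the value at $c=0$, the key observation is that whenever the linear system \eqref{eq:suff_cond} admits a solution, the minimum-norm solution $\Wm_{\!\mathcal{I}} = \Hm_{\mathcal{I}}^{+} \Hm_{\mathcal{I}}$ produced by the algorithm is the orthogonal projection onto the row space of $\Hm_{\mathcal{I}}$. Being a nonzero orthogonal projection (of rank $|\mathcal{I}|\ge 1$), its spectral norm equals $1$, so $\|\Wm_{\!\mathcal{I}}\|^2 = 1 > 0$ and the test fails. When \eqref{eq:suff_cond} has no solution, the test fails by construction (since $\widetilde{\Wm}_{\!\mathcal{I}} = \pmb{\infty}$). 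Either way, no $\mathcal{I}$ ever triggers an elimination at $c=0$, so $\pmb{\mathcal{S}}(0)$ retains its initialization $2^{[K]}\setminus\emptyset$.

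For the value at $c=\infty$ there are two things to check. First, singletons $\{k\}$ can never be removed: the elimination rule requires $\mathcal{K}\cap\mathcal{I} \ne \emptyset$ and $\mathcal{K}\cap\bar{\mathcal{I}} \ne \emptyset$, which is impossible when $|\mathcal{K}|=1$. Second, every non-singleton $\mathcal{K}$ with $|\mathcal{K}|\ge 2$ \emph{is} removed: picking any $k\in\mathcal{K}$ and setting $\mathcal{I}=\{k\}$, both intersection conditions hold, and the only remaining question is whether $\widetilde{\Wm}_{\{k\}}$ has finite norm. Here I would invoke the MISO, $M\ge K$ standing assumption: for generic $\Hm$ the row $\hv_k$ lies outside the row span of $\Hm_{\bar{\{k\}}}\in\mathbb{C}^{(K-1)\times M}$, whose kernel has dimension $M-K+1\ge 1$, so \eqref{eq:suff_cond} for $\mathcal{I}=\{k\}$ is feasible and $\widetilde{\Wm}_{\{k\}} = \Hm_{\{k\}}^{+}\Hm_{\{k\}}$ is a finite-norm projector. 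Its norm is trivially bounded by $c = \infty$, so $\mathcal{K}$ is eliminated. Combining the two facts yields $\pmb{\mathcal{S}}(\infty) = \{\{1\},\ldots,\{K\}\}$.

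The only delicate point is the identification $\|\Hm_{\mathcal{I}}^{+}\Hm_{\mathcal{I}}\| = 1$, which relies on the standard characterization of the Moore--Penrose pseudoinverse as the orthogonal projector onto the row space; this is the place where one must be careful to exclude degenerate channels for which some $\Hm_{\mathcal{I}}$ drops rank. Everything else is straightforward bookkeeping on the algorithm's main loop.
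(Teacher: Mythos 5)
The paper gives no proof of this claim at all---it is stated as ``straightforward from the algorithm''---and your verification is exactly the straightforward check that is intended: monotonicity of the test $\|\Wm_{\!\Ic}\|^2\le c$ in $c$, impossibility of eliminating singletons, and feasibility of $\Ic=\{k\}$ for full-row-rank $\Hm$ with $M\ge K$. In substance your argument is correct.

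One point deserves correction, though it does not break the proof. You read $\Hm_{\Ic}^+$ as the Moore--Penrose inverse of $\Hm_{\Ic}$ and conclude that $\Wm_{\!\Ic}=\Hm_{\Ic}^+\Hm_{\Ic}$ is the \emph{orthogonal} projector onto the row space of $\Hm_{\Ic}$, with norm exactly $1$. But the paper defines $\Hm_{\Ic}^+$ as the submatrix of $\Hm^+$ (the pseudoinverse of the \emph{full} channel matrix) formed by the columns indexed by $\Ic$. The resulting $\Wm_{\!\Ic}$ is idempotent but in general an \emph{oblique} projector (it must also annihilate $\Hm_{\bar\Ic}$, which the orthogonal projector onto $\mathrm{row}(\Hm_{\Ic})$ does not), and its norm can be arbitrarily large when the users' row spaces are nearly aligned---this is precisely what the threshold $c$ is designed to detect; if the norm were always $1$ the algorithm would be vacuous for any $c\ge1$. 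Fortunately, the only fact you need at $c=0$ is $\|\Wm_{\!\Ic}\|^2>0$, and this survives: any $\Wm$ verifying \eqref{eq:suff_cond} with $\Hm_{\Ic}\ne\zerov$ satisfies $\|\Hm_{\Ic}\|=\|\Hm_{\Ic}\Wm\|\le\|\Hm_{\Ic}\|\,\sigma_{\max}(\Wm)$, hence $\|\Wm\|\ge\sigma_{\max}(\Wm)\ge1$, so the test $\|\Wm_{\!\Ic}\|^2\le 0$ always fails and $\pmb{\mathcal{S}}(0)=2^{[K]}\setminus\emptyset$. Your caveats about degenerate channels (a zero row $\Hm_{\Ic}$, or rank deficiency making $\Ic=\{k\}$ infeasible) are legitimate edge cases that the paper also leaves implicit.
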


In practice, in order to reduce the precoding and decoding complexity, 
one may want to order the streams somehow and use only the
``best'' ones. From the
above discussion, we observe that one way to order the streams is to use
the minimum threshold for a stream to be eliminated from
Algorithm~\ref{algo:SEA}. Specifically, such a threshold is defined as,
for each $\Kc\subseteq[K]$ with $|\Kc|\ge2$, 
\begin{align}
  c_{\Kc} :\!&= \min_{\Ic:\,\Ic\cap\Kc\ne\emptyset,
  \bar{\Ic}\cap\Kc\ne\emptyset} \| \widetilde{\Wm}_{\!\Ic} \|^2
  \label{eq:cK0} \\
  &= \min_{\Ic_1\subset\Kc, \Ic_1\ne\emptyset
  \atop
  \Ic_2\subseteq\bar{\Kc}} \| \widetilde{\Wm}_{\!\Ic_1\cup\Ic_2}
  \|^2. \label{eq:cK} 
\end{align}
It is straightforward to verify that setting $c=c_{\Kc}$ can eliminate
the stream $\Kc$ with Algorithm~\ref{algo:SEA}. Then, one can order the
common streams $\Kc\subseteq[K]$, $|\Kc|\ge2$, according to the values
$\{c_{\Kc}\}_{\Kc\subseteq[K], |\Kc|\ge2}$. Note that there are in total
$2^K-1-K$ common streams and, hence, $2^K-1-K$ variables $c_{\Kc}$.
Since they can only take one of the $2^{K}-2$ values in
$\{\|\widetilde{\Wm}_{\!\Ic}\|^2\}_{\Ic\subset[K], \Ic\neq\emptyset}$, it is probable
that more than one common stream share the same threshold value. In that
case, one can introduce a simple randomization to resolve the tie
situation. 
Algorithm~\ref{algo:SOA} summarizes this procedure. 

\begin{algorithm}[t]
\caption{Stream ordering algorithm.}
\label{algo:SOA}
\begin{algorithmic}
  \State \underline{Input}: channel matrix $\Hm$, randomization
  parameter $\sigma$
  \State Compute $\Hm^\dag$
  \For{non-empty subset $\mathcal{I} \subset [K]$ }
  \State Compute $\|\widetilde{\Wm}_\mathcal{I}\|^2$ according to
  \eqref{eq:MP}
  \EndFor
  \For{non-empty subset $\mathcal{K} \subseteq [K]$ with $|\Kc|\ge2$ }
  \State Compute $c_{\Kc}$ according to \eqref{eq:cK} 
  \State Randomization: let $c_{\Kc}' = c_{\Kc} + \rv{D}_{\Kc}$ where
  $\rv{D}_{\Kc}$ is a zero-mean uniform random variable with variance $\sigma^2$ 
  \EndFor
  \State Sort $\{c'_{\Kc}\}_{\Kc\subseteq[K]}$ in decreasing order
  \State \underline{Output}: ordered streams
\end{algorithmic}
\end{algorithm}

Note that streams with larger threshold values are considered
``better''. The following claim is straightforward from the definition of the
minimum threshold in \eqref{eq:cK0}. 
\begin{claim}
  The minimum threshold $c_{\Kc}$, for $\Kc\subseteq[K]$ and
  $|\Kc|\ge2$, as defined in \eqref{eq:cK0}, is decreasing with the partial ordering of $\Kc$.
  Specifically, $c_{\Kc} \ge c_{\Kc'}$ if $\Kc \subseteq \Kc'$. 
\end{claim}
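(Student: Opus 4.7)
The plan is to reduce the claim to a straightforward set-inclusion argument on the feasible sets of the two minimizations defining $c_{\Kc}$ and $c_{\Kc'}$. Recall from \eqref{eq:cK0} that
\begin{equation*}
c_{\Kc} = \min_{\Ic \in \Fc(\Kc)} \|\widetilde{\Wm}_{\!\Ic}\|^2, \quad \text{where } \Fc(\Kc) := \{\Ic\subset[K] : \Ic\cap\Kc\ne\emptyset,\ \bar\Ic\cap\Kc\ne\emptyset\},
\end{equation*}
and similarly for $c_{\Kc'}$ with feasible set $\Fc(\Kc')$. The quantity being minimized, $\|\widetilde{\Wm}_{\!\Ic}\|^2$, depends only on $\Ic$ (through $\Hm$), not on $\Kc$, so the monotonicity claim boils down to showing that enlarging $\Kc$ enlarges the set of ``splitting'' index sets $\Ic$.

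Concretely, I would verify that $\Fc(\Kc)\subseteq \Fc(\Kc')$ whenever $\Kc\subseteq\Kc'$. Fix any $\Ic\in\Fc(\Kc)$. From $\Kc\subseteq\Kc'$ we have $\Ic\cap\Kc\subseteq\Ic\cap\Kc'$ and $\bar\Ic\cap\Kc\subseteq\bar\Ic\cap\Kc'$; since the left-hand sides are non-empty by assumption, the right-hand sides are non-empty, hence $\Ic\in\Fc(\Kc')$. Because $c_{\Kc'}$ is a minimum of the same function over a superset of the domain used for $c_{\Kc}$, we conclude $c_{\Kc'}\le c_{\Kc}$, which is exactly the desired monotonicity.

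There is no real obstacle here; this is a one-line monotonicity fact expressing the intuition that asking $\Ic$ to ``split'' a larger group $\Kc'$ is a weaker requirement than splitting a smaller group $\Kc$, so more index sets qualify and the minimum can only decrease. No use of the structure of $\widetilde{\Wm}_{\!\Ic}$ or of the Moore--Penrose construction is needed.
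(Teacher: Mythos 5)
Your proof is correct and is precisely the ``straightforward from the definition'' argument the paper has in mind: enlarging $\Kc$ only weakens the splitting condition on $\Ic$, so the feasible set of the minimization grows and the minimum can only decrease. The set-inclusion verification $\Fc(\Kc)\subseteq\Fc(\Kc')$ is exactly right, and you correctly note that no structure of $\widetilde{\Wm}_{\!\Ic}$ is needed.
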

Note that this is intuitive since demanding more users to decode the
same {sub-message} should become more costly, and therefore higher-order
{sub-messages} have lower priority. 
With Algorithm~\ref{algo:SOA}, one can choose the $N$ ``best'' streams for
an arbitrary number $N \le 2^K-1$. Although it is a heuristic way to
identify a given number of best streams, the complexity is much lower
compared to the exact solution. 
Note that to find the exact solution, one would need 
to consider all subsets of the $2^K-1$ streams with cardinality
$N$, i.e., to check all $\binom{2^K-1}{N}$ possibilities. For instance,
there are $4\times 10^{14}$ possibilities when $K=8$ and $N=8$. For each
possibility, one needs to solve the sum rate maximization problem
subject to the previously derived rate constraints. Therefore, the complexity of
such optimal algorithms is prohibitive for practical application. On the
other hand, we can show that the complexity of Algorithm~\ref{algo:SOA}
is $O(4^K)$ which is around $7\times10^4$ and no optimization problem
needs to be solved. Indeed, in Algorithm~\ref{algo:SOA}, for each
$\Ic\subset[K]$, the complexity for computing
$\|\widetilde{\Wm}_{\!\Ic}\|^2$, (including finding $\Wm_{\Ic}$,
verifying~\eqref{eq:suff_cond}, and computing the norm)  is $O(K^2M + K
M^2 + M^2) = O(K M^2)$; the complexity for finding the minimum in
\eqref{eq:cK} is $O(2^K)$; the sorting has complexity $O(2^K \log(2^K))
= O(K 2^K)$. Therefore, the overall complexity of
Algorithm~\ref{algo:SOA} is  $O\left( KM^2 2^K + 2^K 2^K + K 2^K
\right) = O(4^K)$, assuming reasonably that $K^2 M \le 2^K$ when
$K$ and $M$ become large. 

To show that Algorithm~\ref{algo:SOA} can be practically effective, we
run a numerical simulation for $K=4$ users. We use the one-ring
scattering model~\cite{adhikary2013joint} to introduce spatial
correlation, in which scenario RS is particularly useful. In the
simulation, we consider two groups with low inter-group correlation and
high intra-group correlation. Each of the four users can be associated
randomly with one of the groups. We then apply Algorithm~\ref{algo:SOA}
to order the streams. In Fig.~\ref{fig:ordered_stream}, we show the
achievable rate when the $N$ ``best'' streams out of the total $2^K-1$ streams 
are activated. We also plot the achievable rate of the $1$-layer RS scheme
in which all private streams and one common stream to all users (the
stream $[K]$) are activated. We observe that when $N=K+1=5$, the algorithm
chooses a common stream to combine with the $K=4$ private streams, which
improves the sum rate performance. It outperforms the $1$-layer scheme
that does not depend on the channel realization. This example shows that our algorithm 
can provide an effective and efficient way to select a given number of streams adapted
to the channel condition.

 \begin{figure}
 	\centering
 	\includegraphics[width=0.48\textwidth]{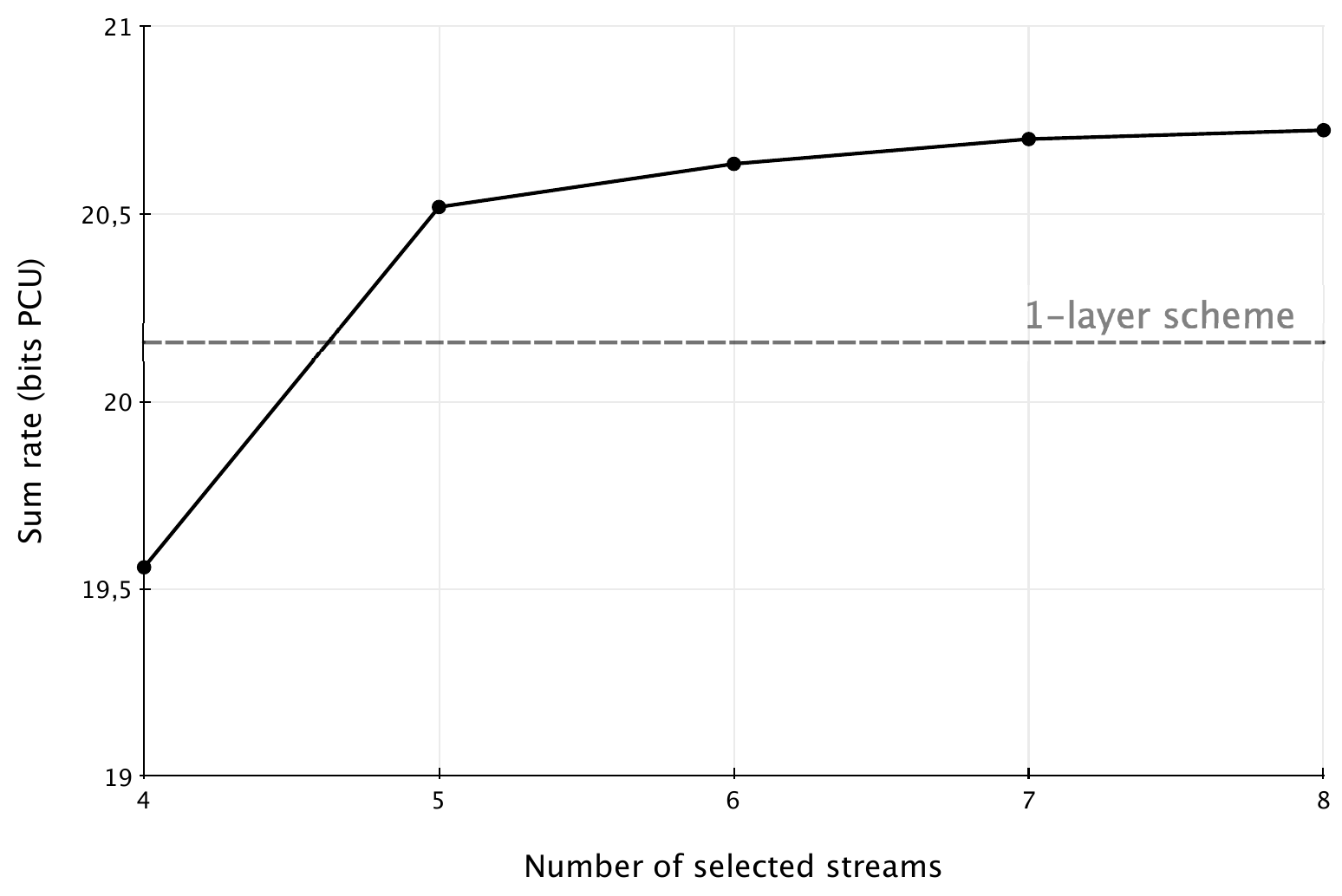}
 	\caption{Average sum rates vs.~the number of active streams selected by
        Algorithm~\ref{algo:SOA}. $K=4, M=4, G=2, P=30\,$dB. Averaged
        over $1000$ channel realizations generated by the one-ring
        scattering model~\cite{adhikary2013joint} with parameters
        $\Delta=\frac{40\pi}{180},
        \theta_g=-\frac{\pi}{3}+\Delta+\frac{\pi}{3}(g-1)$, $g=1,2$.}
        \label{fig:ordered_stream}
 \end{figure} 

\section{Constant-Gap Optimality and Non-Optimality}\label{sec:optimality}\label{sec:2U3U}

In the previous sections, we have investigated the achievable rate of 
linear precoded RS schemes. In this section, we are interested in the
optimality of such schemes as compared to the capacity region in the
constant-gap sense.  

\subsection{Linear precoding alone is not {constant-gap} optimal}\label{sec:LP}

We have shown in Section~\ref{sec:SISO} that 
even the single-user transmission, as an extreme case of the linear schemes, can achieve the sum capacity
to within a constant gap. {We shall now} show that the same
optimality does not hold with multiple antennas with linear precoding
alone. For this purpose, we consider a two-user MISO BC with {two} transmit
antennas. Note that, in this case, the channel matrix
$\Hm_k\in\mathbb{C}^{1\times2}$ is a row vector for each user $k,k=1,2$,
therefore, it is instead denoted by $\hv_k$ following our notational
convention. As discussed in Section \ref{sec:capacity_def}, it is
without loss of optimality to consider the following quantity as the sum
capacity as we are only interested in the capacity to within a constant gap
\begin{align}
C_{\text{sum}}&\approx\log \det(\Id + P \Hm^\H \Hm) \nonumber\\
& =\log \left(1 + P \|\hv_1\|^2 + P \| \hv_2\|^2 + P^2
\det(\Hm\Hm^\H)\right),
\end{align}
where $\Hm := [\hv_1^\T\ \hv_2^\T]^\T$ since $\hv_1$ and $\hv_2$ here
are row vectors. 
The above quantity is within $\log 3$ {bits/s/Hz} to
\begin{multline}
  \max\Bigl\{ \log(1+P \|\hv_1\|^2),
  \log(1+ P\|
  \hv_2\|^2), \\
  \log(1+P^2\det(\Hm\Hm^\H))
  \Bigr\}. \label{eq:DPC0}
\end{multline}
Note that the first two terms in \eqref{eq:DPC0} can be achieved with single-user transmission,
by serving the stronger user. Therefore, the only non-trivial case is when $\log(1+P^2\det(\Hm\Hm^\H))$
is the dominating term in \eqref{eq:DPC0}.

To prove our statement, let us assume that
the channel matrix has the following triangular form
\begin{align}
\Hm &= \begin{bmatrix} 1 & 0 \\ f & g \end{bmatrix}, \label{eq:triang_H}
\end{align}%
where the normalization can be done by scaling the transmit power; hence, $\hv_1 = [1\quad 0]$ and $\hv_2=[f\quad g]$.
In this case, the sum capacity \eqref{eq:DPC0} becomes
\begin{multline}
C_{\text{sum}} \approx \max\Bigl\{ \log(1+P), \\ 
\log(1+P |f|^2+P |g|^2), \log(1+P^2|g|^2)   \Bigr\}. \label{eq:DPC}
\end{multline}

Now let us restrict ourselves to linear precoding schemes at the transmitter and treating interference as noise at the receivers.
In particular, we let $\rvVec{X} = \rvVec{X}_1 + \rvVec{X}_2$ such that $\E[\rvVec{X}_1 \rvVec{X}_1^\H] = \Qm_1$ and $\E[\rvVec{X}_2 \rvVec{X}_2^\H] = \Qm_2$ with the following eigenvalue decompositions
\begin{align}
\Qm_1 &= \begin{bmatrix}  u_1 &v_1 \\ \tilde{v}_1 & \tilde{u}_1 \end{bmatrix} \begin{bmatrix}
\lambda_1 & \\ & \mu_1 \end{bmatrix} \begin{bmatrix}  u_1^* & \tilde{v}_1^* \\ v_1^* &
\tilde{u}_1^* \end{bmatrix}, \label{eq:AandB1}\\
\Qm_2 &= \begin{bmatrix}  u_2 &v_2 \\ \tilde{v}_2 & \tilde{u}_2 \end{bmatrix} \begin{bmatrix} \lambda_2 & \\ & \mu_2 \end{bmatrix} \begin{bmatrix}
 u_2^* & \tilde{v}_2^* \\ v_2^* & \tilde{u}_2^* \end{bmatrix},
\label{eq:AandB2}
\end{align}
where $|\tilde{u}_1|^2 = | u_1|^2 = 1-|\tilde{v}_1|^2 = 1 - |v_1|^2$ and $\lambda_1 \ge \mu_1 \ge 0$ without loss of generality; the same convention is
applied for $\Qm_2$. Due to the Gaussian signaling, we have
\begin{align}
R_1 &= \log\left( 1 + \frac{\hv_1\Qm_1 \hv_1^\H}{1 +  \hv_1 \Qm_2
\hv_1^\H} \right)\\
&= \log\left( 1 + \frac{\Qm_1(1,1)}{1 + \Qm_2(1,1)} \right), \label{eq:R1_LP}\\
R_2 &= \log\left( 1 + \frac{ \hv_2 \Qm_2 \hv_2^\H}{1 +\hv_2 \Qm_1
\hv_2^\H} \right).
\label{eq:R2_LP}
\end{align}%
Note that we are only interested in the case with
\begin{align}
\frac{\Qm_1(1,1)}{1 +  \Qm_2(1,1)} \ge 1 \quad \text{and} \quad \frac{\hv_2 \Qm_2 \hv_2^\H}{1 + \hv_2 \Qm_1 \hv_2^\H} \ge 1,
\label{eq:cond_AB}
\end{align}
for otherwise it is equivalent to the single-user case to within a constant gap. In this case, the achievable sum rate with linear precoding can be
written as
\begin{align}
\!\!\!\!\!\!R_1 + R_2 &\approx \log\left(\frac{\Qm_1(1,1)}{1 +  \Qm_2(1,1)} \right) + \log\left( \frac{\hv_2 \Qm_2 \hv_2^\H}{1 + \hv_2 \Qm_1 \hv_2^\H} \right) \\
\!\!\!\!\!\!&= \log\left(\frac{\Qm_1(1,1)}{1 + \hv_2 \Qm_1 \hv_2^\H} \right) + \log\left( \frac{\hv_2 \Qm_2 \hv_2^\H}{1 +  \Qm_2(1,1)} \right)\!. \label{eq:sumrate-LP}
\end{align}%
We can now maximize over $\Qm_1$ and over $\Qm_2$ separately. In fact, one can show the following lemma.
\begin{lemma}\label{lemma:ub}
	For any $\Qm_1$ and $\Qm_2$ in \eqref{eq:AandB1} and
        \eqref{eq:AandB2}, we have
	\begin{align}
	\frac{\Qm_1(1,1)}{1 + \hv_2 \Qm_1 \hv_2^\H}
	&\le 2 \min\left\{ \frac{2}{|f|^2}+2\frac{|g|^2}{|f|^2}\lambda_1,\ \lambda_1 \right\}, \label{eq:ub_A}\\
	\frac{\hv_2 \Qm_2 \hv_2^\H}{1 +  \Qm_2(1,1)} &\le 2 |f|^2 + 2 |g|^2 \lambda_2.\label{eq:ub_B}
	\end{align}%
\end{lemma}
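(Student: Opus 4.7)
The plan is to exploit the triangular form of $\Hm$ in \eqref{eq:triang_H}. Since $\hv_1 = [1\quad 0]$ and $\hv_2 = [f\quad g]$, defining the coordinate vector $\ev_2 := [0\quad 1]$ gives the two elementary linear relations $\hv_2 = f\hv_1 + g\ev_2$ and $\hv_1 = \frac{1}{f}\hv_2 - \frac{g}{f}\ev_2$, which I will substitute into the two quadratic forms appearing in the lemma. The only other tool I will need is the quadratic-form triangle inequality $(\av+\bv)\Qm(\av+\bv)^\H \le 2\,\av\Qm\av^\H + 2\,\bv\Qm\bv^\H$, valid for any positive semi-definite $\Qm$ (immediate from $2\Re(\av\Qm\bv^\H) \le \av\Qm\av^\H + \bv\Qm\bv^\H$ by Cauchy--Schwarz in the inner product induced by $\Qm$), together with the trivial observation that every diagonal entry of $\Qm_k$ is upper bounded by its largest eigenvalue $\lambda_k$. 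In fact the spectral parametrisation \eqref{eq:AandB1}--\eqref{eq:AandB2} enters the proof only through this last bound.

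For \eqref{eq:ub_A}, substituting $\hv_1 = \frac{1}{f}\hv_2 - \frac{g}{f}\ev_2$ into the triangle inequality and using $\Qm_1(2,2) = \ev_2\Qm_1\ev_2^\H \le \lambda_1$ yields
\[
\Qm_1(1,1) \;=\; \hv_1\Qm_1\hv_1^\H \;\le\; \frac{2}{|f|^2}\,\hv_2\Qm_1\hv_2^\H + \frac{2|g|^2}{|f|^2}\,\lambda_1.
\]
Dividing by $1+\hv_2\Qm_1\hv_2^\H$ and bounding the two resulting ratios trivially by $1$ and by $\lambda_1$ respectively gives the first candidate $\frac{2}{|f|^2}+\frac{2|g|^2}{|f|^2}\lambda_1$. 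The second candidate $\lambda_1$ follows from $\Qm_1(1,1) = |u_1|^2\lambda_1 + |v_1|^2\mu_1 \le \lambda_1$ (using $|u_1|^2+|v_1|^2=1$ and $\mu_1\le\lambda_1$) together with $1+\hv_2\Qm_1\hv_2^\H \ge 1$. Taking the minimum of the two and absorbing the remaining slack into the outer factor of $2$ stated in the lemma concludes.

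For \eqref{eq:ub_B} I would swap the roles of $\hv_1$ and $\hv_2$. Applying the same triangle inequality to $\hv_2 = f\hv_1 + g\ev_2$ and using $\Qm_2(2,2)\le\lambda_2$ gives
\[
\hv_2\Qm_2\hv_2^\H \;\le\; 2|f|^2\,\Qm_2(1,1) + 2|g|^2\,\lambda_2.
\]
Dividing by $1+\Qm_2(1,1)$ and noting that $\frac{\Qm_2(1,1)}{1+\Qm_2(1,1)} \le 1$ and $\frac{\lambda_2}{1+\Qm_2(1,1)} \le \lambda_2$ finishes the argument.

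I do not expect any real obstacle; the proof reduces to spotting the right decomposition of $\hv_1$ in terms of $\hv_2$ and the axis vector $\ev_2$ (and the symmetric decomposition for $\hv_2$), which the triangular form of $\Hm$ makes transparent. The one design choice is which pair of vectors to plug into the triangle inequality, and the natural one is precisely that for which the interference quadratic form appearing in the denominator reproduces exactly one of the two terms produced on the right-hand side.
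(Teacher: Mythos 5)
Your proof is correct, and for the first inequality it takes a genuinely different --- and cleaner --- route than the paper's. The paper proves \eqref{eq:ub_A} by brute force on the eigen-parametrisation: it writes numerator and denominator explicitly in terms of $(u_1,v_1,\tilde u_1,\tilde v_1,\lambda_1,\mu_1)$, splits the ratio into a $\lambda_1$-part and a $\mu_1$-part, lower-bounds the two interference coefficients by $(|fu_1|-|g\tilde v_1|)^2$ and $\tfrac{|f|^2+|g|^2}{2}$ via an ordering/rearrangement argument, and then maximises over $u_1$ by equalising two monotone terms --- several steps, each needing care. You instead invert the triangular relation to get $\hv_1=\tfrac1f\hv_2-\tfrac gf\ev_2$ and apply $(\av+\bv)\Qm(\av+\bv)^\H\le 2\,\av\Qm\av^\H+2\,\bv\Qm\bv^\H$, which makes the denominator quadratic form $\hv_2\Qm_1\hv_2^\H$ appear on the right-hand side by design; the spectral parametrisation enters only through $\Qm_1(i,i)\le\lambda_1$. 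This in fact yields \eqref{eq:ub_A} without the outer factor of $2$, a strictly stronger statement (immaterial here, since only constant gaps matter). For \eqref{eq:ub_B} the two arguments essentially coincide: the paper likewise just applies $|a+b|^2\le 2|a|^2+2|b|^2$ to $\hv_2=f\hv_1+g\ev_2$, written out in eigen-coordinates. The only point worth making explicit is the degenerate case $f=0$, where your decomposition of $\hv_1$ is undefined; but there the first candidate in the minimum is $+\infty$ and the claim reduces to the bound by $\lambda_1$, which you prove independently, so nothing is lost.
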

\begin{proof}
  See Appendix~\ref{app:lemma3}. 
\end{proof}

To show that linear precoding is not constant-gap optimal,
  we consider high SNR $P$ and let the channel coefficients scale with $P$ as $f = P^{\alpha_f}$ and
	$g = P^{\alpha_g}$ for some $\alpha_f,\alpha_g\in\mathbb{R}$. It follows that the achievable sum rate also scales with $P$ as
	$d_{\text{LP}}(\alpha_f, \alpha_g) \log P + O(1)$, while the sum
        capacity scales as $d_{\text{DPC}}(\alpha_f, \alpha_g) \log P +
        O(1)$. Here, the pre-log factor is {the} GDoF as explained in
        Section~\ref{sec:def_constant}. {We shall show that there
        exist some $(\alpha_f,\alpha_g)$ such that}
        $d_{\text{LP}}(\alpha_f, \alpha_g) < d_{\text{DPC}}(\alpha_f,
        \alpha_g)$. 
	
        Indeed, when $\alpha_f>\alpha_g>\alpha_f-\frac{1}{2}\ge 0$, \eqref{eq:ub_A} scales as $P^{1+2\alpha_g-2\alpha_f}$ and \eqref{eq:ub_B} scales as $P^{1+2\alpha_g}$. It follows that $d_{\text{LP}} \le 2 + 4\alpha_g - 2\alpha_f$. From \eqref{eq:DPC}, we verify that
	$d_{\text{DPC}} = \max\left\{ 1,\ 1+2\alpha_f,\ 1+2\alpha_g,\
        2+2\alpha_g \right\} = 2 + 2\alpha_g$. Thus, we have shown that
        $d_{\text{DPC}} > d_{\text{LP}}$ for such $(\alpha_f,\alpha_g)$.
        Hence, linear precoding is not GDoF optimal, thus not
        constant-gap optimal.

\begin{remark}\label{remark:nonGaussian}
It is important to emphasize that the above results are based on the
assumption of Gaussian signaling. In fact, Gaussian input has been
proved to be strictly suboptimal in some multi-user settings. For
instance, in~\cite{dytso2016interference}, the authors have investigated
the two-user Gaussian interference channel with point-to-point codes,
and showed that a mixed input is needed to achieve the optimal GDoF. 
There, the mixed input is the sum of a discrete
random variable and a Gaussian variable. With the mixed input, the
optimal decoding, e.g., maximum likelihood decoding, exploits the
structure of the interference and achieves a better performance than in
the case with Gaussian interference. Essentially, as the authors of
\cite{dytso2016interference} pointed out, the discrete part carries
somehow a sort of ``common information'' that both receivers can
exploit. That explains why RS is not needed with such inputs to
achieve the optimal GDoF. Note, however, that the optimal decoding in this case may be much more 
involved than the one for Gaussian interference. The latter only needs a
simple nearest neighbour decoding. 
\end{remark}

In the following, we consider linear precoding schemes with rate-splitting.

\subsection{Linear precoded RS is constant-gap optimal with two
users}\label{sec:Two_Op}

The rate region of the two-user BC with RS is given in
Example~\ref{ex:2user} from \eqref{eq:ex1} to \eqref{eq:ex4}. Defining
\begin{gather}
  R_1 = \tilde{R}_1 + \tilde{R}_{12}^{(1)}, \quad
  R_2 = \tilde{R}_2 + \tilde{R}_{12}^{(2)}, 
\end{gather}
and applying the Fourier-Motzkin elimination~\cite{NIT}, we obtain the following achievable region
\begin{gather}
  R_1 \le C_1, \quad R_2 \le C_2,\\
  R_1 + R_2 \le C_{12},
\end{gather}
which corresponds to the capacity region $\mathcal{C}_{\text{MAC}}(\{\Hm_k^\H\}_k, P\Id)$ of the dual MAC. We thereby
establish the constant-gap optimality of the proposed RS scheme with
MMSE precoding {in the two-user case}.

\subsection{Linear precoded RS is constant-gap sub-optimal with
three users}

We shall show that the constang-gap optimality does not extend
beyond two users. To that end, we first present the constant-gap sum
rate of the three-user case.
\begin{proposition} \label{prop:3-user-sr}
	The optimal sum rate $R_{\text{sum}}$ of the proposed RS scheme
        with MMSE precoding in the three-user case is within a constant
        gap to
	\begin{multline}
		R_{\text{sum}}^* :=
		\max\Biggl\{C_{12},C_{13},C_{23}, \\
		\min\biggl\{C_{123}, 
                \displaystyle\min_{k=1,2,3}\frac{{l_k^{(2)}-C_k}}{2}
                +\xi \biggr\}\Biggr\} ,
		\label{eq:three_sr}
	\end{multline}
        where $l_i^{(k)}$, $i,k\in[K]$, is defined by \eqref{eq:def_l},
        and
        \begin{equation}
                \xi:={\frac{C_1+C_2+C_3-C_{12}
                -C_{23}-C_{13}+3C_{123}}{2}}. 
        \end{equation}%
\end{proposition}
\begin{proof}
  This is a direct consequence of Proposition~\ref{prop:sumrate}.
  Indeed, if only two users out of the three, say, users $1$ and
  $2$, are activated, then the sum rate $C_{12}$ is achievable to within
  a constant gap according to Proposition~\ref{prop:sumrate}. Similarly, 
  $C_{13}$ and $C_{23}$ can be achieved if we activate 
  another subset instead. If all three users are activated, then the
  achievable constant-gap sum rate in Proposition~\ref{prop:sumrate} becomes the second term inside the
  $\max\{\cdot\}$ in \eqref{eq:three_sr}. Note that activating only one
  user achieves $\max\{C_1, C_2, C_3\}$ that is strictly smaller than
  $\max\{C_{12},C_{13},C_{23}\}$. 
\end{proof}

From the above result, we can prove the constant-gap sub-optimality of the proposed
RS scheme. 
\begin{corollary}
  The proposed scheme is not GDoF optimal~(and therefore not constant-gap
  optimal) in the three-user case.
\end{corollary}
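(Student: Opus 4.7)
The plan is to combine Proposition~\ref{prop:3-user-sr}, which pins down the optimal RS sum rate to within a constant gap as $R^*_{\text{sum}}$, with Lemma~\ref{app:lemma1}, which together with MAC--BC duality places the sum capacity within a constant gap of $C_{123}$. It then suffices to exhibit a single parametric family of three-user channels along which $\lim_{P\to\infty} R^*_{\text{sum}}/\log P$ is strictly smaller than $\lim_{P\to\infty} C_{123}/\log P$. By the hierarchy $\text{DoF}\ll\text{GDoF}\ll\text{constant-gap}$ recalled in Section~\ref{sec:def_constant}, any GDoF counterexample is automatically a constant-gap counterexample, so this single construction settles both negations in the corollary.

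The reduction has two sub-tasks. First, one needs $\max\{C_{12},C_{13},C_{23}\}$ strictly below $C_{123}$ in GDoF, i.e., the three channels must jointly open up more signal space than any pair does; a natural candidate is a $\nt=3$ MISO channel with $\hv_k = \uv + \epsilon\vv_k$, where $\epsilon = P^{-\alpha}$, $\uv$ is a fixed common direction, and the $\vv_k$ span its orthogonal complement with asymmetric pairwise angles. Second, one needs the second element inside the $\min$ in $R^*_{\text{sum}}$ to also fall strictly below $C_{123}$, which boils down to the inequality $\min_k(\tau_k-C_k)+\sum_iC_i-\sum_{ij}C_{ij}+C_{123}<0$ at high SNR. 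Each $C_{\mathcal{K}}$ is read off from the Gram determinant $\det(\Hm_{\mathcal{K}}\Hm_{\mathcal{K}}^\H)$, and each $\tau_k$ is evaluated by applying the Sherman--Morrison identity $(P^{-1}\Id+\Hm_j^\H\Hm_j)^{-1}\approx P(\Id-\Hm_j^\H(\Hm_j\Hm_j^\H)^{-1}\Hm_j)$ to the two summands defining $\Qm_{\underline{\collectionS}_k}$. After substitution the verification reduces to inequalities among pairwise projection residuals and the normalized triple Gram volume.

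The main obstacle is that many natural symmetric channel families produce exact cancellations in these inequalities at leading order in $\log P$, making the second element in the $\min$ match $C_{123}$ and rendering RS apparently GDoF-optimal. The strict separation therefore requires a sufficiently asymmetric channel in which the MMSE-null-space geometries used by the two distinct common streams decoded at each user are significantly misaligned. This asymmetry is the formal manifestation of the intuition stated just before the corollary: without an ordered transmitter-side pre-cancellation as in DPC, the receiver cannot jointly decode two independent MMSE-precoded common streams as efficiently as it would decode a single sum-covariance signal, so that the $\tau_k$ bound binds strictly and drags $R^*_{\text{sum}}$ below $C_{123}$. Pinning down such an asymmetric family and verifying the strict inequality at high SNR is the heart of the proof.
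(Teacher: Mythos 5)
Your high-level strategy coincides with the paper's: use Proposition~\ref{prop:3-user-sr} to reduce the corollary to exhibiting one channel family whose $R^*_{\text{sum}}$ pre-log falls strictly below $d_{123}:=\lim_{P\to\infty}C_{123}/\log P$, and note that a GDoF gap implies a constant-gap one. But the proposal stops exactly where the proof has to start: no concrete channel is produced and the strict inequality is never verified --- you yourself defer this as ``the heart of the proof.'' Worse, the one family you do sketch, $\hv_k=\uv+P^{-\alpha}\vv_k$, is an instance of the ``exact cancellation'' you warn about. For it one gets $d_i=1$, $d_{ij}=2-2\alpha$ and $d_{123}=3-4\alpha$ (take $0<\alpha<1/2$), so $\sum_i d_i-\sum_{i<j}d_{ij}+d_{123}=2\alpha>0$ and the needed deficit must come entirely from $\min_k(\tau_k-C_k)$; the Sherman--Morrison evaluation you propose gives $\tau_k=(1-2\alpha)\log P+O(1)$, hence $\tau_k-C_k=-2\alpha\log P+O(1)$, which cancels the surplus exactly and leaves the second element of the $\min$ in \eqref{eq:three_sr} at pre-log $d_{123}$. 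Making the pairwise angles asymmetric does not help, since all these exponents depend only on the common scaling $P^{-\alpha}$ and not on the angles; this family therefore does not witness suboptimality.

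The paper's proof sidesteps both difficulties. First, it discards the $\tau_k$ terms altogether via the crude bound $\tau_k\lessapprox C_k$ in \eqref{eq:tau<C} (a one-line consequence of $\Qm_{\collectionS}\preceq|\collectionS|P\Id$); since $\tau_k-C_k\lessapprox 0$, dropping this term only loosens the upper bound, and the verification collapses to the purely combinatorial conditions $\max_{i<j}d_{ij}<d_{123}$ and $\sum_i d_i-\sum_{i<j}d_{ij}+d_{123}<0$. Second, it chooses a qualitatively different example, $\Hm_{\!1}=[1\ 0\ 0]$, $\Hm_{\!2}=[0\ 1\ 0]$, $\Hm_{\!3}=[P^{\alpha/2}\ P^{\alpha/2}\ 1]$, where the asymmetry is in user \emph{strength} rather than in angles: $d_3=1+\alpha$ inflates $d_{13}=d_{23}=2+\alpha$ without inflating $d_{123}=3$, so that $\sum_i d_i-\sum_{i<j}d_{ij}+d_{123}=-\alpha<0$ and the RS pre-log is at most $\max\{2,\,2+\alpha,\,3-\tfrac{\alpha}{2}\}<3$ for $0<\alpha<1$. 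To complete your argument you would need to replace your candidate family by one of this type (or prove a strictly larger $\tau_k$ deficit than $-2\alpha\log P$, which your family does not admit).
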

\begin{proof}
In order to prove the suboptimality, it is enough to find a class of
channel matrices $\Hm$ such that $C_{123} - {R}^*_{\text{sum}}$
can be arbitrarily large, where $C_{123}$ is the sum capacity of the
channel. Since ${R}^*_{\text{sum}}$ in \eqref{eq:three_sr} is still
quite involved due to the presence of {$l_k^{(2)}$}, $k=1,2,3$, we further
upper bound ${R}^*_{\text{sum}}$ using the following inequality. 
\begin{equation}
  {l_k^{(2)}} \le {\log\det(\Id + 2 P \Hm_k\Hm_k^\H)
  \approx{}} C_k. \label{eq:tau<C}
\end{equation}
Hence, we have  
\begin{multline}
  {R}^*_{\text{sum}} \lessapprox \overline{R}^*_{\text{sum}} :=\max\biggl\{C_{12}, C_{13}, C_{23},
 \\ {\frac{{{C_1} + {C_2} + {C_3} -
 {C_{12}} - {C_{23}} - {C_{13}}} + {3C_{123}}  }{2}}\biggr\}.
 \label{eq:tmp01}
 \end{multline}
In the following, we shall show that $\overline{R}^*_{\text{sum}}$ can
be arbitrarily smaller than $C_{123}$. To that end, we shall focus on the high
SNR regime and look at the pre-log of the rate expressions. Let us consider a channel with
\begin{equation}
  \Hm_{\!1} = [1\ 0\ 0],\
  \Hm_{\!2} = [0\ 1\ 0],\ \Hm_{\!3} = [P^{\frac{\alpha}{2}}\
  P^{\frac{\alpha}{2}}\ 1].
\end{equation}%
Define the pre-log $d_{\mathcal{K}} :=
\lim_{P\to\infty} \frac{C_{\mathcal{K}}}{{\log P}}$, and we have
\begin{gather}
  d_1 = d_2 = 1, \quad d_3 = 1+\alpha, \\
  d_{12} = 2,\quad d_{13} = d_{23} = 2 + \alpha, \quad d_{123} = 3.
\end{gather}
From \eqref{eq:tmp01}, we have the following upper bound for the pre-log
of ${R}^*_{\text{sum}}$, 
\begin{equation}
  \max\Bigl\{2, 2+\alpha, 3 - \frac{\alpha}{2} \Bigr\},
\end{equation}
which is strictly smaller than the optimal sum GDoF $d_{123} = 3$ for
any $0 < \alpha < 1$. This implies the constant-gap sub-optimality of the
proposed RS scheme.
\begin{figure*}[t]
	\centering
	\includegraphics[width=0.8\textwidth]{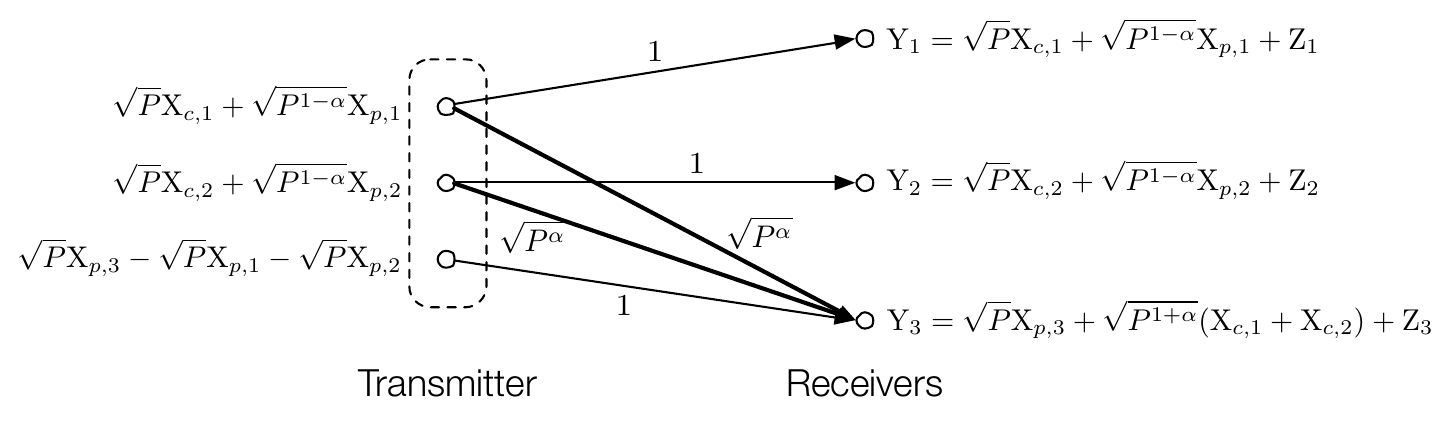}
	\caption{A pathological three-user MISO BC example.}\label{fig:example}
\end{figure*} 
\end{proof}

\subsection{Deficiency of receiver-side interference mitigation}
One may wonder why the proposed RS scheme is constant-gap optimal 
in the two-user case but not in the three-user case. In particular, is
it possible to improve the current RS scheme with a better
precoding~(other than the MMSE precoding) or with a more
sophisticated decoding scheme? 
To have a better understanding of why the RS scheme fails in the
three-user case, let us have a closer look at the above pathological
example. From the dual MAC, we know that a GDoF triple~$(1,1,1)$ is
achievable, e.g., with joint decoding or successive
interference cancellation in the uplink receiver. Specifically,
the receiver can first decode user~3's message using only the third
antenna, obtaining GDoF~$1$, and remove it before decoding user~1 and
user~2's messages from the first and the second antennas, respectively.
In the downlink, with DPC, the exact reverse procedure can
be applied and the same GDoF triple can be obtained. This is the
advantage of transmitter-side interference cancellation where the
transmitter manipulates optimally all the signals so that the
interference at the receivers' side is minimized.

With the RS scheme, however, the receivers are
interference-limited. To see this, let us impose that user~1 and user~2
both have GDoF~$1$. Thus, full power $P$ must be used for antennas
$1$ and $2$ to send the users' signals, which generates an interference
power $P^{1+\alpha}$ at user~3. Note that user~3's signal, in order not to
interfere with user 1 and 2's signals, must be essentially sent from antenna
$3$, arriving at user~3 with power $P$. Unless the interference could be
fully cancelled or decoded and removed, full GDoF~$1$ would not be achievable.
As shown in Figure~\ref{fig:example}, we can split the signal~1 into common and private parts
$\sqrt{P}\rv{X}_{c,1} + \sqrt{P^{1-\alpha}}\rv{X}_{p,1}$ with DoF
$\alpha$ and $1-\alpha$, respectively. Similarly
for signal~$2$, we use $\sqrt{P}\rv{X}_{c,2} + \sqrt{P^{1-\alpha}}\rv{X}_{p,2}$.
Signal~$3$ carries the private information for user~3 and cancels the
private parts in signal~$1$ and $2$, namely,
$\sqrt{P}\rv{X}_{p,3} - \sqrt{P}\rv{X}_{p,1} - \sqrt{P}\rv{X}_{p,2}$, so
that user~3 receives
$\sqrt{P}\rv{X}_{p,3} + \sqrt{P^{1+\alpha}}(\rv{X}_{c,1} + \rv{X}_{c,2}) + \rv{Z}_3$.
Note that $\rv{X}_{p,3}, \rv{X}_{c,1}, \rv{X}_{c,2}$, with a total DoF
$1+2\alpha$, must be decoded by user~$3$ in order to recover the private
DoF of $1$. This is impossible since the maximum GDoF for receiver~3 is
$1+\alpha$. Instead, user~3 can only achieve a GDoF of
$1+\alpha-2\alpha = 1-\alpha$. In other words, the RS scheme achieves
the $(1,1,1-\alpha)$ GDoF triple instead of $(1,1,1)$. Note that the
above discussion is independent of the precoding scheme and the decoding
scheme, which implies that the sub-optimality of the RS scheme cannot be
resolved in these directions.  

In fact, the fundamental issue of the RS scheme in the above example is
that independent codebooks are used for different streams. Intuitively,
the interference signal space becomes too large for any individual
receiver. If one could align different interferers into a reduced
subspace, however, then the achievable rate could be improved. In
particular, in the above case, if the information in $\rv{X}_{c,1} +
\rv{X}_{c,2}$ only occupies a DoF of $\alpha$ instead of $2\alpha$, then
user~3 could decode the sum of the interferences instead of the
individual interferences, and achieves the GDoF $1+\alpha-\alpha = 1$.
This is precisely the idea of interference alignment
\cite{cadambe2008interference, motahari2014real}. Instead of using
independent codebooks, one could use the same lattice codebook for
$\rv{X}_{c,1}$ and $\rv{X}_{c,2}$ in such a way that the sum is still
within the same codebook and thus have a reduced rate. Therefore,
combining RS and interference alignment, it is possible to reduce the
GDoF gap and may be possible to attain constant-gap optimality. 
One may also improve the performance by using
non-linear precoding for interference cancellation. For example, a
recent work \cite{flores2018tomlinson} proposes a RS scheme with
Tomlinson-Harashima precoding which has lower implementation complexity
than the DPC and is shown to outperform the linear precoding schemes.
Nevertheless, such improvements come at the price of a higher complexity
at the transmitter side, which limits the practical and theoretical
interests. 

Finally, it is also worth mentioning that Gaussian signaling is known to achieve
the capacity region of a two-user BC with common
message~\cite{geng2014capacity}, while the optimal signaling for more
users with common messages is still unknown. The above rate analysis
on the three-user RS scheme, and thus the conclusion, may not hold with 
a different signaling.

\section{Conclusion}\label{sec:conclusion}

We have investigated the achievable rate region of linearly
precoded rate-splitting schemes in the $K$-user MIMO broadcast channel
to within a constant gap. In particular, we have derived the achievable
constant-gap sum rate for $K\le3$, and obtained closed-form upper
bounds for $K>3$. The constant-gap results, though asymptotic, provide
useful insights that guided us to propose a practical stream elimination
algorithm. Our analyses also revealed the constant-gap optimality of
linearly precoded RS with respect to the fundamental capacity region in
the two-user case. While such optimality does not extend beyond two
users, we have provided explanations on the deficiency and potential
remedies. The results presented in the initial version of this work have been
followed up in \cite{jsac2020} with additional precoder optimization and
numerical simulations. Therein, the stream elimination algorithm has
also been applied and shown effective in practical scenarios.  

Note that for $K$-user broadcast channels with general message sets ---
even the Gaussian MIMO case with degraded message sets --- the capacity
region is still unknown. In those cases, rate-splitting goes beyond
a method to simplify transmission, as in our case with linear
precoding, and becomes an essential tool to improve the achievable rate
region when combined with binning~\cite{romero2020rate}.

\appendix
\subsection{Proof of Lemma~\ref{lemma:2}}\label{app:lemma2}
Denote $P_i:=\sum_{k=1}^K\trace(\Qm_k^{(i)}), i=1,\cdots, N$, then the power constraint can be expressed as 
\begin{align}
  P_s:=\sum_{i=1}^N\lambda_iP_i\le P. \label{eq:tmp335}
\end{align}
Define the following variable $\mu_i$ for each resource portion $i$
\begin{equation}
\mu_i := \begin{cases}
1, &\text{if } P_i\le P_s,\\
\frac{P_s}{P_i}, & \text{otherwise.}
\end{cases} \label{eq:def_mu}
\end{equation}
One can notice that $\mu_i\le 1, \forall i$. With resource-sharing, user~$k$ achieves the following rate
  \begin{align}
    \MoveEqLeft{\sum_{i=1}^N \lambda_i \log\det\biggl( \Id +
    \Bigl(\Id+\sum_{l=1,\,l\ne k}\Hm_k
    \Qm_l^{(i)} \Hm_k^\H\Bigr)^{-1} \Hm_k \Qm_k^{(i)} \Hm_k^\H \biggr)
    } \nonumber \\
     &= \sum_{i=1}^N \lambda_i \log\det\biggl( \Id +
    \frac{1}{\mu_i}\sum_{l=1}^K \Hm_k (\mu_i\Qm_l^{(i)}) \Hm_k^\H
    \biggr) \nonumber \\ 
    &\quad- \sum_{i=1}^N \lambda_i \log\det\biggl( \Id +
    \frac{1}{\mu_i}\sum_{l=1,\,l\ne k}^K \Hm_k (\mu_i\Qm_l^{(i)})
    \Hm_k^\H \biggr) \nonumber \\
    &\le \nr \sum_{i=1}^N \lambda_i\log\frac{1}{\mu_i} \nonumber \\
    &\quad + \sum_{i=1}^N \lambda_i \log\det\biggl( \Id +
    \sum_{l=1}^K \Hm_k (\mu_i\Qm_l^{(i)}) \Hm_k^\H
    \biggr) \nonumber \\ 
    &\quad- \sum_{i=1}^N \lambda_i \log\det\biggl( \Id +
    \sum_{l=1,\,l\ne k}^K \Hm_k (\mu_i\Qm_l^{(i)})
    \Hm_k^\H \biggr) \label{eq:tmp823} \\
    &\le  \nr + \sum_{i=1}^N \lambda_i R'_{k,i}, \label{eq:tmp824}
   \end{align}
   where 
   \begin{IEEEeqnarray}{rCl}
   R'_{k,i} &:=& \log\det\biggl( \Id +
   \sum_{l=1}^K \Hm_k (\mu_i\Qm_l^{(i)}) \Hm_k^\H
   \biggr) \nonumber\\
   &&  -\> \log\det\biggl( \Id + \sum_{l=1,\,l\ne k}^K \Hm_k (\mu_i\Qm_l^{(i)}) \Hm_k^\H \biggr). 
   \end{IEEEeqnarray}
   The inequality \eqref{eq:tmp823} is from
   $\mu_i \le 1$, $\forall\,i$; the last inequality is from the
   concavity of the $\log$ function, i.e., $ \sum_{i=1}^N
   \lambda_i\log\frac{1}{\mu_i}\le
   \log\sum_{i=1}^N\lambda_i\frac{1}{\mu_i}$, and the fact that
   \begin{align}
\sum_{i=1}^N\frac{\lambda_i}{\mu_i}=\sum_{i:P_i\le P_s}\lambda_i+\sum_{i:P_i> P_s}\frac{\lambda_iP_i}{P_s}\le 1+1=2.
   \end{align}
   From the definition of $\{\mu_i\}_i$ in \eqref{eq:def_mu}, we can
   verify that, for any $i$,
   \begin{align}
     \sum_{k=1}^K \trace(\mu_i \Qm_k^{(i)}) &= \mu_i \sum_{k=1}^K
   \trace(\Qm_k^{(i)}) \\
   &= \mu_i P_i \\
   &= \min\{P_i, P_s\} \label{eq:tmp336} \\
   &\le P, \label{eq:tmp334}
   \end{align}
   where \eqref{eq:tmp336} is from \eqref{eq:def_mu}; \eqref{eq:tmp334}
   is from \eqref{eq:tmp335}.  Hence, we have, for each
   $i$, $(R'_{1,i}, \ldots, R'_{K,i})\in
   \mathcal{C}_{\text{BC}}^{\text{LP}}(\{\Hm_k\}_k, P)$, and 
   \begin{align}
   \Bigl(\sum_{i=1}^N \lambda_i R'_{1,i}, \ldots, \sum_{i=1}^N \lambda_i
   R'_{K,i}\Bigr)\in
   \Conv\left\{\mathcal{C}_{\text{BC}}^{\text{LP}}(\{\Hm_k\}_k,
   P)\right\}. \nonumber
   \end{align}
   Finally, from \eqref{eq:tmp824}, the proof is complete.
   \qed
   \subsection{Proof of Lemma~\ref{lemma:ub}}\label{app:lemma3} 
   	Let us first consider the part with $\Qm_1$
	\begin{align}
	\!\!\!\!
	\!
	\frac{\Qm_1(1,1)}{1 + \hv_2 \Qm_1 \hv_2^\H} &= \frac{| u_1|^2 \lambda_1 + |v_1|^2 \mu_1}{|f  u_1 + g \tilde{v}_1|^2 \lambda_1 + |f v_1 + g \tilde{u}_1|^2
		\mu_1 + 1}.
	\end{align}
	Since $|f  u_1 + g \tilde{v}_1|^2 + |f v_1 + g \tilde{u}_1|^2 = |f|^2 + |g|^2$, we use $a \le b$ to denote the ordered version of
	$|f  u_1 + g \tilde{v}_1|^2$ and $|f v_1 + g \tilde{u}_1|^2$, then we have $a\ge(|f  u_1| - |g \tilde{v}_1|)^2$ and $2b\ge |f|^2 + |g|^2$. Using the fact that $\lambda_1\ge\mu_1$, we have the
	following upper bound
	\begin{align}
	\MoveEqLeft{\frac{\Qm_1(1,1)}{1 + \hv_2 \Qm_1\hv_2^\H}} \nonumber \\
	&\le \frac{| u_1|^2 \lambda_1 + |v_1|^2 \mu_1}{a \lambda_1 +  b \mu_1 + 1} \\
	&\le \frac{| u_1|^2 \lambda_1 }{a \lambda_1 + 1} + \frac{|v_1|^2 \mu_1}{b \mu_1 + 1} \\
	&\le \frac{| u_1|^2 \lambda_1 }{(|f  u_1| - |g \tilde{v}_1|)^2 \lambda_1 + 1} + \frac{\mu_1}{\frac{|f|^2 + |g|^2}{2} \mu_1 + 1} \\
	&\le \max_{ u_1:\, |f  u_1| \ge |g \tilde{v}_1|}
	\frac{| u_1|^2 \lambda_1 }{(|f  u_1| - |g \tilde{v}_1|)^2 \lambda_1 + 1} \nonumber
	\\
	&\qquad + \min\left\{ \left(\frac{|f|^2 + |g|^2}{2}\right)^{-1}\!\!, \ {\mu_1} \right\}  \label{eq:ddj}\\
	&\le \max_{ u_1:\, |f  u_1| \ge |g \tilde{v}_1|} \min\left\{ \frac{| u_1|^2}{(|f  u_1| - |g
		\tilde{v}_1|)^2}, \ {| u_1|^2 \lambda_1 }\right\} \nonumber \\
	&\qquad +
	\min\left\{ \left(\frac{|f|^2 + |g|^2}{2}\right)^{-1}, \ {\mu_1} \right\}  \label{eq:dds}\\
	&\le \min\left\{ \left( \frac{1}{\sqrt{\lambda_1}} + |g| \right)^2
	\frac{\lambda_1}{|f|^2},\ \lambda_1 \right\}  \nonumber \\
	&\qquad + \min\left\{ \left(\frac{|f|^2 + |g|^2}{2}\right)^{-1}, \ {\mu_1} \right\} \label{eq:ddk} \\
	&\le \min\left\{ \frac{2}{|f|^2}+2\frac{|g|^2}{|f|^2}\lambda_1,\ \lambda_1 \right\}
	\nonumber \\
	&\qquad + \min\left\{ \left(\frac{|f|^2 + |g|^2}{2}\right)^{-1}\!\!,
	\ {\mu_1} \right\} \\
	&\le 2 \min\left\{ \frac{2}{|f|^2}+2\frac{|g|^2}{|f|^2}\lambda_1,\ \lambda_1 \right\}, 
	\end{align}
	where \eqref{eq:ddj} is from the fact that the objective function is increasing with $|f  u_1|$ when
	$|f  u_1| \le |g \tilde{v}_1|$; since in \eqref{eq:dds}, $\frac{| u_1|^2}{(|f  u_1| - |g
		\tilde{v}_1|)^2}$ is decreasing with $| u_1|$ and ${| u_1|^2 \lambda_1 }$ is increasing with $| u_1|$,
	the max-min is attained when both terms are equalized or when $| u_1| = 1$;
	\eqref{eq:ddk} is indeed an upper bound of \eqref{eq:dds}. The second part can be shown as follows.
        \begin{IEEEeqnarray}{rCl}
          \IEEEeqnarraymulticol{3}{l}{\frac{\hv_2 \Qm_2 \hv_2^\H}{1 +
          \Qm_2(1,1)}}\nonumber \\ &=& \frac{|f  u_2 + g \tilde{v}_2|^2 \lambda_2 + |f v_2 + g \tilde{u}_2|^2 \mu_2}{| u_2|^2 \lambda_2 + |v_2|^2 \mu_2 + 1} \\
	&\le& 2  \frac{|f|^2 (| u_2|^2\lambda_2 + |{v}_2|^2 \mu_2) +
        |g|^2 (|\tilde{v}_2|^2 \lambda_2 + |\tilde{u}_2|^2 \mu_2)}{|
        u_2|^2 \lambda_2 + |v_2|^2 \mu_2 + 1} \IEEEeqnarraynumspace\label{eq:tmp777} \\
	&\le& 2 |f|^2 \frac{ | u_2|^2\lambda_2 + |{v}_2|^2 \mu_2}{| u_2|^2 \lambda_2 + |v_2|^2 \mu_2 +
		1}\nonumber \\
                &&+\> 2 |g|^2 \frac{ \lambda_2 }{| u_2|^2 \lambda_2 +
                |v_2|^2 \mu_2 + 1} \label{eq:tmp778} \\
	&\le& 2 |f|^2 + 2 |g|^2  \lambda_2,  
	\end{IEEEeqnarray}
        where to obtain \eqref{eq:tmp777} we used $|a+b|^2 \le 2|a|^2 +
        2 |b|^2$, $\forall\,a,b\in\mathbb{C}$; \eqref{eq:tmp778} is from
        the fact that $|\tilde{v}_2|^2 \lambda_2 + |\tilde{u}_2|^2 \mu_2
        \le |\tilde{v}_2|^2 \lambda_2 + |\tilde{u}_2|^2 \lambda_2 =
        \lambda_2$ due to $\mu_2\le\lambda_2$ and $|\tilde{u}_2|^2 +
        |\tilde{v}_2|^2 = 1$. 
        \qed

        \subsection{{Proof of the achievability of \eqref{eq:UB-Kuser} for
        $K=2$ and $K=3$}\label{app:prop5}}
First we present the following submodularity property that will be useful later. 
\begin{lemma}\label{lemma:submodularity}
	Let $\mathcal{A},\mathcal{A}',\mathcal{B}\subseteq [K]$ with
	$\mathcal{A}'\subseteq\mathcal{A}$. Then, we have
	\begin{equation}
	C_{\mathcal{A}\cup\mathcal{B}} - C_{\mathcal{A}} \le C_{\mathcal{A}'\cup\mathcal{B}} - C_{\mathcal{A}'}.\label{eq:property_redun}
	\end{equation}
\end{lemma}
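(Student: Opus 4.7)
The plan is to reduce the claim to a known submodularity property of the log-determinant. First, using Sylvester's identity, rewrite
\[
 C_{\mathcal{K}} = \log\det\!\left(\Id + P\,\Hm_{\!\mathcal{K}}\Hm_{\!\mathcal{K}}^{\H}\right) = \log\det\!\Bigl(\Id + P\sum_{k\in\mathcal{K}}\Hm_{\!k}^{\H}\Hm_{\!k}\Bigr),
\]
so that $C_{\mathcal{K}} = f(A_{\mathcal{K}})$ where $A_{\mathcal{K}} := P\sum_{k\in\mathcal{K}}\Hm_{\!k}^{\H}\Hm_{\!k}\succeq 0$ and $f(M) := \log\det(\Id+M)$. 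Since $A$ is additive over disjoint sets, the inequality becomes
\[
 f(A_{\mathcal{A}}+\Delta) - f(A_{\mathcal{A}}) \;\le\; f(A_{\mathcal{A}'}+\Delta') - f(A_{\mathcal{A}'}),
\]
where $\Delta := A_{\mathcal{B}\setminus\mathcal{A}}$, $\Delta' := A_{\mathcal{B}\setminus\mathcal{A}'}$, and I note two immediate semidefinite orderings from $\mathcal{A}'\subseteq\mathcal{A}$: namely $A_{\mathcal{A}'}\preceq A_{\mathcal{A}}$ and $\Delta\preceq\Delta'$ (the second because $\mathcal{B}\setminus\mathcal{A}\subseteq\mathcal{B}\setminus\mathcal{A}'$).

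The main step is then the classical identity
\[
 \log\det(X+Z) - \log\det(X) = \log\det\!\bigl(\Id + X^{-1/2}ZX^{-1/2}\bigr),\qquad X\succ 0,\ Z\succeq 0,
\]
applied with $X = \Id + A_{\mathcal{A}}$, $Z = \Delta$ for the left side, and with $X' = \Id + A_{\mathcal{A}'}$, $Z' = \Delta'$ for the right side. Comparing the two, I split the work into two monotonicity lemmas. (i) \emph{Monotonicity in $Z$}: since $\Delta\preceq\Delta'$, one has $X^{-1/2}\Delta X^{-1/2}\preceq X^{-1/2}\Delta' X^{-1/2}$, so $\log\det(\Id + X^{-1/2}\Delta X^{-1/2})\le \log\det(\Id + X^{-1/2}\Delta' X^{-1/2})$. (ii) \emph{Anti-monotonicity in $X$}: since $X'\preceq X$ gives $X^{-1}\preceq (X')^{-1}$, the cyclic property of nonzero eigenvalues yields
\[
 \lambda_i\!\bigl(X^{-1/2}\Delta' X^{-1/2}\bigr) = \lambda_i\!\bigl((\Delta')^{1/2} X^{-1} (\Delta')^{1/2}\bigr) \le \lambda_i\!\bigl((\Delta')^{1/2} (X')^{-1} (\Delta')^{1/2}\bigr) = \lambda_i\!\bigl((X')^{-1/2}\Delta'(X')^{-1/2}\bigr),
\]
which gives $\log\det(\Id + X^{-1/2}\Delta' X^{-1/2})\le \log\det(\Id + (X')^{-1/2}\Delta'(X')^{-1/2})$. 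Chaining (i) and (ii) closes the argument.

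The one point that requires a little care is justifying the eigenvalue inequality in (ii) when $\Delta'$ is rank-deficient, so that $(\Delta')^{1/2}$ is only positive semidefinite; this is handled by a standard limiting argument with $\Delta'+\varepsilon\Id$ and letting $\varepsilon\downarrow 0$, using continuity of $\log\det(\Id+\cdot)$. Apart from this technicality, the proof is essentially an application of well-known operator-monotonicity facts for $\log\det$, and no properties of the channel matrices $\Hm_{\!k}$ beyond $\Hm_{\!k}^{\H}\Hm_{\!k}\succeq 0$ are used.
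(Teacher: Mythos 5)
Your proof is correct, but it takes a genuinely different route from the paper's. The paper proves the lemma information-theoretically: it identifies each $C_{\mathcal{K}}$ with a (conditional) mutual information in the dual Gaussian MAC, e.g.\ $C_{\mathcal{A}\cup\mathcal{B}}$ with $I(\rv{X}_{\mathcal{A}\cup\mathcal{B}};\rv{Y})$ and $C_{\mathcal{A}}$ with $I(\rv{X}_{\mathcal{A}};\rv{Y}\cond\rv{X}_{\mathcal{B}})$, applies the chain rule, and reduces the claim to $I(\rv{X}_{\mathcal{B}};\rv{Y})\le I(\rv{X}_{\mathcal{B}};\rv{Y}\cond\rv{X}_{\mathcal{A}''})$, which holds by independence of the inputs. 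That argument is two lines once the identification is made, but it leans on the Gaussian-MAC reading of the log-det. Your version establishes the same submodularity of $\mathcal{K}\mapsto\log\det\bigl(\Id+P\sum_{k\in\mathcal{K}}\Hm_{\!k}^{\H}\Hm_{\!k}\bigr)$ purely by matrix analysis: Sylvester's identity, the quotient identity $\log\det(X+Z)-\log\det X=\log\det(\Id+X^{-1/2}ZX^{-1/2})$, congruence-monotonicity in $Z$, and anti-monotonicity in $X$ via order reversal of the inverse plus Weyl's eigenvalue monotonicity. All of these steps check out; in fact the $\varepsilon$-regularization you flag is unnecessary, since $AB$ and $BA$ share their nonzero eigenvalues for any compatible pair, so the cyclic step already holds for rank-deficient $\Delta'$. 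The paper itself notes that the inequality ``can be proved directly using matrix properties or, more conveniently, with mutual information''; yours is precisely the former option --- longer, but self-contained and valid for arbitrary PSD summands without invoking any channel interpretation.
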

\begin{proof}
	Indeed, this inequality can be proved directly using matrix properties
	or, more conveniently, with mutual information.
	Let $\mathcal{A} = \mathcal{A}'\cup\mathcal{A}''$ with
	$\mathcal{A}'\cap\mathcal{A}''=\emptyset$. Then, identifying
	$C_{\mathcal{A}\cup\mathcal{B}}$, $C_{\mathcal{A}}$,
	$C_{\mathcal{A}'\cup\mathcal{B}}$, and $C_{\mathcal{A}'}$ with
	$I(\rv{X}_{\mathcal{A}\cup\mathcal{B}}; \rv{Y})$,
	$I(\rv{X}_{\mathcal{A}}; \rv{Y} \cond \rv{X}_{\mathcal{B}})$,
	$I(\rv{X}_{\mathcal{A}'\cup\mathcal{B}}; \rv{Y} \cond
	\rv{X}_{\mathcal{A}''})$,
	$I(\rv{X}_{\mathcal{A}'}; \rv{Y} \cond \rv{X}_{\mathcal{B}}$, and
	$\rv{X}_{\mathcal{A}''})$, respectively, the above inequality becomes
	\begin{equation}
	I(\rv{X}_{\mathcal{B}}; \rv{Y}) \le
	I(\rv{X}_{\mathcal{B}}; \rv{Y} \cond \rv{X}_{\mathcal{A}''})
	\end{equation}
	after applying the chain rule of mutual information. This holds since
	$I(\rv{X}_{\mathcal{B}}; \rv{Y} \cond \rv{X}_{\mathcal{A}''}) =
	I(\rv{X}_{\mathcal{B}}; \rv{Y}, \rv{X}_{\mathcal{A}''}) \ge
	I(\rv{X}_{\mathcal{B}}; \rv{Y})$ due to the independence of
	$\rv{X}_{\mathcal{B}}$ and $\rv{X}_{\mathcal{A}''}$.
\end{proof}

\subsubsection{{The two-user case ($K=2$)}}
{When $K=2$, the rate region is given by
\eqref{eq:ex1}-\eqref{eq:ex4}. Then, we can verify that any rate
quadruple such that $\tilde{R}_1 = l_1^{\{1\}}, 
\tilde{R}_2 = l_2^{\{2\}},
\tilde{R}_{12}^{(1)} + \tilde{R}_{12}^{(2)} = \min\{ l_1^{\{1,2\}} -
l_1^{\{1\}}, l_2^{\{1,2\}} - l_2^{\{1\}} \}$ lies inside the region.
Indeed, one have $l_1^{\{1,2\}} - l_1^{\{1\}} \ge 0$ and
$l_2^{\{1,2\}} - l_2^{\{1\}} \ge 0$ due to the monotone property of
$l_k^{\collectionS}$ with respect to $\collectionS$. 
Hence, the sum rate given by the
  quadruple is $\tilde{R}_1 + 
  \tilde{R}_2 + \tilde{R}_{12}^{(1)} + \tilde{R}_{12}^{(2)} =
  l_1^{\{1\}} + l_2^{\{2\}}
  + \min\{ l_1^{\{1,2\}} - l_1^{\{1\}}, l_2^{\{1,2\}} - l_2^{\{1\}}\}$ which
  coincides with the upper bound \eqref{eq:UB-Kuser} for $K=2$ using the
  definition \eqref{eq:def_l}. 
  }

\subsubsection{{The three-user case ($K=3$)}}

When $K=3$, the possible minimal collections
$\underline{\collectionS}_1$ are $\{\{1\}\}$, $\{\{1,2\}\}$,
$\{\{1,3\}\}$, $\{\{1,2\}, \{1,3\}\}$, $\{\{1,2,3\}\}$, leading to
the following rate constraints from Proposition~\ref{pro:rate_region},
\begin{IEEEeqnarray}{rCl}
  \tilde{R}_1 &\le& {l_1^{ \{1\} }  = l_1^{(1)} 
  ={}} C_{123}-C_{23}, \IEEEeqnarraynumspace
\label{eq:con_1.1}\\
\tilde{R}_1+\tilde{R}_{12}\! &\le& {l_1^{\{1,2\}} 
={} } C_{13}-C_3
\label{eq:con_1.2}\\
\tilde{R}_1 + \tilde{R}_{13}\!&\le& {l_1^{\{1,3\}} 
={}} C_{12}-C_2,
\label{eq:con_1.3} \\
\tilde{R}_1+ \tilde{R}_{12} + \tilde{R}_{13}&\le& {l_1^{\{1,2\},\{1,3\}}
={} l_1^{(2)}},%
\label{eq:con_1.4}\\
\tilde{R}_1+ \tilde{R}_{12} + \tilde{R}_{13} + \tilde{R}_{123}&\le&
l_1^{\{1,2,3\}} = l_1^{(3)}
={} C_1,
\label{eq:con_1.5}
\end{IEEEeqnarray}%
where we recall that for each collection $\collectionS$, $l_k^{\collectionS} := \log\det\left( \Id +
\Hm_k \Qm_{\collectionS} \Hm_k^\H \right)$ with $\Qm_{\collectionS}$ being defined as
in \eqref{eq:def-Qs}, while for each $m\in[K]$, $l_k^{(m)}$ is defined
as in \eqref{eq:def_l}; slightly abusing the notation, $C_{ijk}$ denotes $C_{\{i,j,k\}}$ as defined in
\eqref{eq:def-C}; the above relationships between the $l$'s and
$C$'s can be verified by their definitions.  

Similarly, we can obtain the following constraints on the rates of the
re-assembled messages which should be decoded by receivers $2$ and $3$,
\begin{IEEEeqnarray}{rCl}
  \tilde{R}_2 &\le& {l_2^{ \{2\} }  = l_2^{(1)} ={}} C_{123}-C_{13},
\label{eq:con_2.1}\IEEEeqnarraynumspace\\
\tilde{R}_2 + \tilde{R}_{12}&\le& { l_2^{\{1,2\}}={}} C_{23}-C_3,
\label{eq:con_2.2}\\
\tilde{R}_2 + \tilde{R}_{23}&\le& {l_2^{\{2,3\}}={}} C_{12}-C_1,
\label{eq:con_2.3}\\
\tilde{R}_2+ \tilde{R}_{12} + \tilde{R}_{23} &\le&  {l_2^{\{1,2\},\{2,3\}}
= l_2^{(2)}}
\label{eq:con_2.4}\\
\tilde{R}_2+ \tilde{R}_{12} + \tilde{R}_{23}+ \tilde{R}_{123}&\le& {
l_2^{\{1,2,3\}} = l_2^{(3)}={}}C_2,
\label{eq:con_2.5}\\
\tilde{R}_3 &\le& { l_3^{ \{3\} }  = l_3^{(1)}={}} C_{123}-C_{12},
\label{eq:con_3.1}\\
\tilde{R}_3 + \tilde{R}_{13} &\le& { l_3^{\{1,3\}}={}} C_{23}-C_2,
\label{eq:con_3.2}\\
\tilde{R}_3 + \tilde{R}_{23} &\le&  { l_3^{\{2,3\}}={}} C_{13}-C_1,
\label{eq:con_3.3}\\
\tilde{R}_3+ \tilde{R}_{13} + \tilde{R}_{23} &\le&
{l_3^{\{1,3\},\{2,3\}} = l_3^{(2)}},
        \label{eq:con_3.4}\\
\tilde{R}_3+ \tilde{R}_{13} + \tilde{R}_{23}+ \tilde{R}_{123}&\le&
{ l_3^{\{1,2,3\}} = l_3^{(3)}={}} C_3.
\label{eq:con_3.5}
\end{IEEEeqnarray}

In the following, we shall show that there exists a rate tuple
satisfying all the constraints \eqref{eq:con_1.1}-\eqref{eq:con_3.5} that
achieve the sum rate upper bound~\eqref{eq:UB-Kuser}. Note that when
$K=3$, the second term in the bound~\eqref{eq:UB-Kuser} becomes
\begin{multline}
  \frac{1}{2}\Bigl( l_1^{(3)} + l_2^{(3)} + l_3^{(3)} + l_1^{(1)} +
  l_2^{(1)} + l_3^{(1)} \\
  + \min\left\{ l_1^{(2)} - l_1^{(3)}, l_2^{(2)} -
  l_2^{(3)}, l_3^{(2)} - l_3^{(3)} \right\} \Bigr). \label{eq:tmp654}
\end{multline}%

Without loss of generality we assume that the receivers are ordered such that
{
\begin{equation}
l_1^{(2)} - l_1^{(3)} \le l_2^{(2)} - l_2^{(3)}, \quad
l_1^{(2)} - l_1^{(3)} \le l_3^{(2)} - l_3^{(3)}. \label{eq:tmp673}
\end{equation}
The upper bound \eqref{eq:tmp654} can be further simplified to  
\begin{IEEEeqnarray}{rCl}
  \IEEEeqnarraymulticol{3}{l}{
  \frac{1}{2}\left( l_2^{(3)} + l_3^{(3)} + l_1^{(1)} +
  l_2^{(1)} + l_3^{(1)} + l_1^{(2)} \right)} \nonumber \\
  &=& {\frac{l_1^{(2)}+C_2+C_3+3C_{123}-C_{12} -C_{23}-C_{13}}{2}}. \label{eq:Rsum}
\end{IEEEeqnarray}

To prove the achievability of \eqref{eq:UB-Kuser}, let us consider the
following two cases. } 
\paragraph{Case $C_{123} < {\frac{{l_1^{(2)}}+C_2+C_3+3C_{123}-C_{12} -C_{23}-C_{13}}{2}}$}
The condition is equivalent to 
\begin{align}
  {l_1^{(2)}}+C_2+C_{3}+C_{123} > C_{13}+C_{23}+C_{12}. \label{eq:assump0}
\end{align}
In this case, we let
\begin{align}
  \tilde{R}_1&=C_{123}-C_{23}, \label{eq:tmp1111}\\
\tilde{R}_2&=C_{123}-C_{13},\\
\tilde{R}_3&=C_{123}-C_{12},\label{eq:tmp2222}\\
\tilde{R}_{12}&=C_{23}-C_{3}-(C_{123}-C_{13}),\label{eq:tmp222}\\
\tilde{R}_{13}&=C_{12}-C_{2}-(C_{123}-C_{23}),\label{eq:tmp333}\\
\tilde{R}_{23}&=C_{13}-C_{1}-(C_{123}-C_{12}),\label{eq:tmp444}\\
\tilde{R}_{123}&=C_1+C_2+C_3+C_{123}-C_{12}-C_{13}-C_{23}.  \label{eq:tmp888}
\end{align}
{First, we verify that the above rate tuple is non-negative. Indeed,
\eqref{eq:tmp1111}-\eqref{eq:tmp2222} are non-negative by the definition
of the $C$'s; from Lemma~\ref{lemma:submodularity},
\eqref{eq:tmp222}-\eqref{eq:tmp444} are non-negative, too; since
$l_1^{(2)} := \log\det\left( \Id + \Hm_1 \Qm_{\{\{1,2\},\{1,3\}\}}
\Hm_1^\H \right)\le \log\det\left( \Id + 2P\Hm_1 \Hm_1^\H \right)
\approx C_1$, the assumption \eqref{eq:assump0} implies the non-negativity of
\eqref{eq:tmp888} up to a constant gap.
Next,} 
we can verify that all constraints \eqref{eq:con_1.1} to
\eqref{eq:con_3.5}, except for \eqref{eq:con_1.4}, \eqref{eq:con_2.4},
and \eqref{eq:con_3.4}, are satisfied with equality. Then, we can verify
that plugging \eqref{eq:tmp1111}, \eqref{eq:tmp222}, and \eqref{eq:tmp333}
into \eqref{eq:con_1.4}, the constraint coincides with the assumption~\eqref{eq:assump0}.  
Similarly, both \eqref{eq:con_2.4} and \eqref{eq:con_3.4} are also
equivalent to \eqref{eq:assump0}.

\paragraph{Case $C_{123} \ge
{\frac{{l_1^{(2)}}+C_2+C_3+3C_{123}-C_{12} -C_{23}-C_{13}}{2}}$}
The condition is equivalent to
\begin{align}
  {l_1^{(2)}}+C_2+C_{3}+C_{123} \le C_{13}+C_{23}+C_{12}. \label{eq:assump1}
\end{align}
In this case, we let the six constraints \eqref{eq:con_1.1},  \eqref{eq:con_1.4}, \eqref{eq:con_2.1},
\eqref{eq:con_2.5}, \eqref{eq:con_3.1}, and \eqref{eq:con_3.5} be
satisfied with equality. It is equivalent to {having} the following
six equalities:
\begin{align}
  \tilde{R}_1&=C_{123}-C_{23}, \label{eq:tmp881}\\
	\tilde{R}_2&=C_{123}-C_{13},\\
	\tilde{R}_3&=C_{123}-C_{12},\label{eq:tmp883}\\
        \tilde{R}_{12}&=\frac{ {l_1^{(2)}} +C_2+C_{13}+C_{23}-C_3-C_{12}-C_{123}}{2},\label{eq:tmp884}\\
        \tilde{R}_{13}&=\frac{ {l_1^{(2)}} +C_3+C_{12}+C_{23}-C_2-C_{13}-C_{123}}{2},\label{eq:tmp885}\\
        \tilde{R}_{23}+\tilde{R}_{123}&=\frac{C_2+C_3+C_{12}+C_{13}-
        {l_1^{(2)}} -C_{23}-C_{123}}{2},
        \label{eq:sum23_123}
\end{align}
where, as in the previous case, \eqref{eq:tmp881}-\eqref{eq:tmp883} are non-negative by the definition
of the $C$'s; since $l_1^{(2)}=l_1^{ \{1,2\}, \{1,3\}}\ge \max\left\{
l_1^{\{1,2\}}, l_1^{\{1,3\}} \right\} = \max\left\{ C_{13}-C_3,
C_{12}-C_2 \right\}$, in \eqref{eq:tmp884} $\tilde{R}_{12}\ge
\frac{1}{2}(C_{13}-C_3 + C_{23} - C_{123}) \ge 0$ according to
Lemma~\ref{lemma:submodularity}, and similary for $\tilde{R}_{13}$; from
the assumption~\eqref{eq:assump1}, we can verify that in
\eqref{eq:sum23_123} $\tilde{R}_{23}+\tilde{R}_{123} \ge
C_2+C_3-C_{23}$ and the latter is non-negative according to
Lemma~\ref{lemma:submodularity}. 

In the following, we shall show that all the remaining constaints
are satisfied. First, \eqref{eq:con_1.2} can be rewritten as
\begin{align}
	\tilde{R}_1+\tilde{R}_{12}&=\frac{{l_1^{(2)}}+C_2+C_{13}+C_{123}-C_3-C_{23}-C_{12}}{2}\nonumber\\
        &\le C_{13}-C_3,
\end{align}
which is equivalent to the assumption \eqref{eq:assump1}. 
Similarly, we can verify that \eqref{eq:con_1.3},  \eqref{eq:con_2.2}, and \eqref{eq:con_3.2}
are all equivalent to the assumption \eqref{eq:assump1}. 

Then, we shall verify the constraints \eqref{eq:con_1.5},
\eqref{eq:con_2.3}, \eqref{eq:con_2.4}, \eqref{eq:con_3.3},
\eqref{eq:con_3.4}, which all involve $\tilde{R}_{23}$ and
$\tilde{R}_{123}$, are satisfied. 
Note that $\tilde{R}_{23}$ and $\tilde{R}_{123}$ remain undetermined
 except for their sum given by \eqref{eq:sum23_123}. 
Due to the symmetry of the constraints and assumptions on receivers~2
and 3, we only need to consider \eqref{eq:con_1.5}, \eqref{eq:con_2.3},
and \eqref{eq:con_2.4}. The three constraints, combined with
\eqref{eq:tmp881}-\eqref{eq:tmp885}, can be rewritten as
\begin{align}
  \tilde{R}_{123} &\le C_1 - {l_1^{(2)}}, \label{eq:tmp721}\\
  \tilde{R}_{23} &\le C_{12} + C_{13} - C_1 - C_{123}, \label{eq:tmp722}\\
  \tilde{R}_{23} &\le \frac{2 {l_2^{(2)}} - {l_1^{(2)}} - C_2 + C_3 + C_{12} +
  C_{13} - C_{23} - C_{123}}{2}. \label{eq:tmp723}
\end{align}
We need to show that there exists $\tilde{R}_{23}\ge0$ and
$\tilde{R}_{123}\ge0$ such that all the three above constraints and
\eqref{eq:sum23_123} can be satisfied simultaneously. It is enough to
show the following.
\begin{itemize}
  \item The right hand sides of \eqref{eq:tmp721}-\eqref{eq:tmp723} are
    all non-negative to within a constant gap. It can be verified,
    1)~for \eqref{eq:tmp721} {since $l_1^{(2)} \le \log\det(\Id + 2P \Hm
    \Hm^\H) \approx C_1$}; 2)~for \eqref{eq:tmp722} from
    Lemma~\ref{lemma:submodularity}; and 3)~for \eqref{eq:tmp723} we
    have
    \begin{IEEEeqnarray}{rCl}
      \IEEEeqnarraymulticol{3}{l}{2 {l_2^{(2)}} - {l_1^{(2)}} - C_2 + C_3 + C_{12} + C_{13} - C_{23} -
C_{123}}\nonumber \\
       &\ge& {l_2^{(2)}} - {l_1^{(2)}} - C_2 + C_{23} + C_{12} + C_{13} - C_{23} -
       C_{123} \nonumber \\
       &\ge&  - C_1   + C_{12} + C_{13}  - C_{123} \nonumber \\
       &\ge& 0,
    \end{IEEEeqnarray}
    where the first inequality is from 
    ${l_2^{(2)}} \ge C_{23} - C_3$ {due to Lemma~\ref{lemma:l-C}}; the second one is from the
    assumption~\eqref{eq:tmp673}; the last one is from
    Lemma~\ref{lemma:submodularity}. 
  \item The sum of the right-hand sides of \eqref{eq:tmp721} and
    \eqref{eq:tmp722} is larger than the right-hand side of
    \eqref{eq:sum23_123}. Indeed, we have  
    \begin{align}
      \MoveEqLeft{( C_1 - {l_1^{(2)}}) +  (C_{12} + C_{13} - C_1 - C_{123})}  
      \nonumber \\ \MoveEqLeft[0]{-\frac{{C_2+C_3+C_{12}+C_{13}} -{l_1^{(2)}}-C_{23}-C_{123}}{2}} \nonumber
      \\
      &= -\frac{{l_1^{(2)}} + C_2 + C_3 - C_{12} - C_{13} - C_{23} +
      C_{123}}{2} \nonumber\\
      &\ge 0, 
    \end{align}
    where the inequality is from the assumption~\eqref{eq:assump1}. 
  \item The sum of right-hand sides of \eqref{eq:tmp721} and
    \eqref{eq:tmp723} is larger than the right-hand side of
    \eqref{eq:sum23_123}. Indeed, we have 
    \begin{IEEEeqnarray}{rCl}
      \IEEEeqnarraymulticol{3}{l}{( C_1 - {l_1^{(2)}}) }\nonumber \\ 
      \IEEEeqnarraymulticol{3}{l}{+\frac{2{l_2^{(2)}} - {l_1^{(2)}} - C_2 + C_3 + C_{12} + C_{13} - C_{23} - C_{123}}{2}}\nonumber \\ 
      \IEEEeqnarraymulticol{3}{l}{- 
      \frac{C_2+C_3+C_{12}+C_{13}-{l_1^{(2)}}-C_{23}-C_{123}}{2}} \nonumber
      \\
      &=& C_1 - {l_1^{(2)}} + {l_2^{(2)}} - C_2 \nonumber\\
      &\ge& 0, \nonumber
    \end{IEEEeqnarray}
    where the inequality follows from the assumption \eqref{eq:tmp673}. 
\end{itemize}
The proof is thus complete. \qed

\begin{IEEEbiographynophoto}{Zheng Li}
received the B.E. degree in information engineering from Southeast University, China, in 2013, and the M.S. degrees in telecommunications from CentraleSup{\'e}lec, France, in 2015, as well as in telecommunications from Southeast University, China, in 2016. She received her Ph.D. degree in telecommunications from CentraleSup{\'e}lec, France, in 2020. She is currently a researcher at Orange Labs Networks, France. 
\end{IEEEbiographynophoto}

\begin{IEEEbiographynophoto}{Sheng Yang} (M'07)
received the B.E. degree in electrical engineering from Jiaotong University, Shanghai, China, in 2001, and both the engineer degree and the M.Sc. degree in electrical engineering from Telecom ParisTech, Paris, France, in 2004, respectively. In 2007, he obtained his Ph.D. from Université de Pierre et Marie Curie (Paris VI). From October 2007 to November 2008, he was with Motorola Research Center in Gif-sur-Yvette, France, as a senior staff research engineer. Since December 2008, he has joined CentraleSupélec, Paris-Saclay University, where he is currently a full professor. From April 2015, he also holds an honorary associate professorship in the department of electrical and electronic engineering of the University of Hong Kong (HKU). He received the 2015 IEEE ComSoc Young Researcher Award for the Europe, Middle East, and Africa Region (EMEA). He was an associate editor of the IEEE transactions on wireless communications from 2015 to 2020. He is currently an associate editor of the IEEE transactions on information theory. 
\end{IEEEbiographynophoto}

\begin{IEEEbiographynophoto}{Shlomo Shamai~(Shitz)}
(S'80-M'82-SM'88-F'94-LF'18)
is with the Department of Electrical Engineering,
Technion---Israel Institute of Technology, where he is a
Technion Distinguished Professor, and holds the William Fondiller
Chair of Telecommunications.
Dr. Shamai (Shitz) is an IEEE Life Fellow, an URSI Fellow, a member of the
Israeli Academy of Sciences and Humanities and a foreign member of the
US National Academy of Engineering. He is the recipient of the 2011
Claude E. Shannon Award, the 2014 Rothschild Prize in
Mathematics/Computer Sciences and Engineering and the
2017 IEEE Richard W. Hamming Medal. He is a co-recipient of the 2018 Third
Bell Labs Prize for Shaping the Future of Information and Communications
Technology.

He is the recipient of numerous technical and paper awards and recognitions
of the IEEE (Donald G. Fink Prize Paper Award), Information Theory,
Communications and Signal Processing Societies as well as EURASIP.
He is listed as a Highly Cited Researcher (Computer Science)
for the years 2013/4/5/6/7/8.
He has served as Associate Editor for the Shannon Theory of the IEEE
Transactions on Information Theory, and has also served twice on the
Board of Governors of the Information Theory Society.
He has also served on the Executive Editorial Board of the IEEE Transactions
on Information Theory, the IEEE Information Theory Society Nominations
and Appointments Committee and the IEEE Information Theory Society,
Shannon Award Committee.
\end{IEEEbiographynophoto}


\begin{thebibliography}{10}
\providecommand{\url}[1]{#1}
\csname url@samestyle\endcsname
\providecommand{\newblock}{\relax}
\providecommand{\bibinfo}[2]{#2}
\providecommand{\BIBentrySTDinterwordspacing}{\spaceskip=0pt\relax}
\providecommand{\BIBentryALTinterwordstretchfactor}{4}
\providecommand{\BIBentryALTinterwordspacing}{\spaceskip=\fontdimen2\font plus
\BIBentryALTinterwordstretchfactor\fontdimen3\font minus
  \fontdimen4\font\relax}
\providecommand{\BIBforeignlanguage}[2]{{%
\expandafter\ifx\csname l@#1\endcsname\relax
\typeout{** WARNING: IEEEtran.bst: No hyphenation pattern has been}%
\typeout{** loaded for the language `#1'. Using the pattern for}%
\typeout{** the default language instead.}%
\else
\language=\csname l@#1\endcsname
\fi
#2}}
\providecommand{\BIBdecl}{\relax}
\BIBdecl

\bibitem{Caire:2003}
G.~Caire and S.~Shamai~(Shitz), ``On the achievable throughput of a
  multiantenna {G}aussian broadcast channel,'' \emph{{IEEE} Trans. Inf.
  Theory}, vol.~49, no.~7, pp. 1691 -- 1706, July 2003.

\bibitem{MIMOBC2006}
H.~Weingarten, Y.~Steinberg, and S.~Shamai~(Shitz), ``The capacity region of
  the {G}aussian multiple-input multiple-output broadcast channel,''
  \emph{{IEEE} Trans. Inf. Theory}, vol.~52, no.~9, pp. 3936 --3964, Sept.
  2006.

\bibitem{costa1983writing}
M.~Costa, ``Writing on dirty paper (corresp.),'' \emph{{IEEE} Trans. Inf.
  Theory}, vol.~29, no.~3, pp. 439--441, May 1983.

\bibitem{viswanath2003sum}
P.~Viswanath and D.~N.~C. Tse, ``Sum capacity of the vector {Gaussian}
  broadcast channel and uplink-downlink duality,'' \emph{{IEEE} Trans. Inf.
  Theory}, vol.~49, no.~8, pp. 1912--1921, Aug. 2003.

\bibitem{jindal2004duality}
N.~Jindal, S.~Vishwanath, and A.~Goldsmith, ``On the duality of {Gaussian}
  multiple-access and broadcast channels,'' \emph{{IEEE} Trans. Inf. Theory},
  vol.~50, no.~5, pp. 768--783, May 2004.

\bibitem{yang2005impact}
S.~Yang and J.-C. Belfiore, ``The impact of channel estimation error on the
  {DPC} region of the two-user {Gaussian} broadcast channel,'' in
  \emph{Proc.~43rd Allerton Conference}, 2005.

\bibitem{yoo2006optimality}
T.~Yoo and A.~Goldsmith, ``On the optimality of multiantenna broadcast
  scheduling using zero-forcing beamforming,'' \emph{{IEEE} J. Sel. Areas
  Commun.}, vol.~24, no.~3, pp. 528--541, Mar. 2006.

\bibitem{lee2007high}
J.~Lee and N.~Jindal, ``{High SNR analysis for MIMO broadcast channels: Dirty
  paper coding versus linear precoding},'' \emph{{IEEE} Trans. Inf. Theory},
  vol.~53, no.~12, pp. 4787--4792, Dec. 2007.

\bibitem{davoodi2017transmitter}
A.~G. Davoodi and S.~A. Jafar, ``Transmitter cooperation under finite precision
  {CSIT}: {A GDoF perspective},'' \emph{{IEEE} Trans. Inf. Theory}, vol.~63,
  no.~9, pp. 6020--6030, Sept. 2017.

\bibitem{carleial1978interference}
A.~Carleial, ``Interference channels,'' \emph{{IEEE} Trans. Inf. Theory},
  vol.~24, no.~1, pp. 60--70, Jan. 1978.

\bibitem{HK:1981}
T.~S. Han and K.~Kobayashi, ``A new achievable rate region for the interference
  channel,'' \emph{{IEEE} Trans. Inf. Theory}, vol.~27, no.~1, pp. 49 -- 60,
  Jan. 1981.

\bibitem{etkin2008gaussian}
R.~H. Etkin, N.~David, and H.~Wang, ``Gaussian interference channel capacity to
  within one bit,'' \emph{{IEEE} Trans. Inf. Theory}, vol.~54, no.~12, pp.
  5534--5562, Dec. 2008.

\bibitem{yang2013degrees}
S.~Yang, M.~Kobayashi, D.~Gesbert, and X.~Yi, ``Degrees of freedom of time
  correlated {MISO} broadcast channel with delayed {CSIT},'' \emph{{IEEE}
  Trans. Inf. Theory}, vol.~59, no.~1, pp. 315--328, Jan. 2013.

\bibitem{davoodi2018gdof}
A.~G. Davoodi, B.~Yuan, and S.~A. Jafar, ``{GDoF region of the MISO BC:
  Bridging the gap between finite precision and perfect CSIT},'' \emph{{IEEE}
  Trans. Inf. Theory}, vol.~64, no.~11, pp. 7208--7217, 2018.

\bibitem{mao2018rate}
Y.~Mao, B.~Clerckx, and V.~O. Li, ``Rate-splitting multiple access for downlink
  communication systems: {B}ridging, generalizing, and outperforming {SDMA and
  NOMA},'' \emph{EURASIP Journal on Wireless Communications and Networking},
  vol. 2018, no.~1, p. 133, May 2018.

\bibitem{piovano2017optimal}
E.~Piovano and B.~Clerckx, ``Optimal {DoF} region of the {$ K $-user MISO BC
  with partial CSIT},'' \emph{{IEEE} Commun. Lett.}, vol.~21, no.~11, pp.
  2368--2371, Nov. 2017.

\bibitem{joudeh2016robust}
H.~Joudeh and B.~Clerckx, ``Robust transmission in downlink multiuser {MISO}
  systems: {A} rate-splitting approach,'' \emph{{IEEE} Trans. Signal Process.},
  vol.~64, no.~23, pp. 6227--6242, July 2016.

\bibitem{joudeh2016sum}
------, ``Sum-rate maximization for linearly precoded downlink multiuser {MISO
  systems with partial CSIT: A} rate-splitting approach,'' \emph{{IEEE} Trans.
  Commun.}, vol.~64, no.~11, pp. 4847--4861, Nov. 2016.

\bibitem{lu2017mmse}
G.~Lu, L.~Li, H.~Tian, and F.~Qian, ``{MMSE-based precoding for rate splitting
  systems with finite feedback},'' \emph{{IEEE} Commun. Lett.}, vol.~22, no.~3,
  pp. 642--645, Mar. 2017.

\bibitem{dai2016rate}
M.~Dai, B.~Clerckx, D.~Gesbert, and G.~Caire, ``A rate splitting strategy for
  massive {MIMO} with imperfect {CSIT},'' \emph{{IEEE} Trans. Wireless
  Commun.}, vol.~15, no.~7, pp. 4611--4624, July 2016.

\bibitem{tse1998multiaccess}
D.~N.~C. Tse and S.~V. Hanly, ``Multiaccess fading channels. {I. Polymatroid}
  structure, optimal resource allocation and throughput capacities,''
  \emph{{IEEE} Trans. Inf. Theory}, vol.~44, no.~7, pp. 2796--2815, 1998.

\bibitem{NIT}
A.~El~Gamal and Y.-H. Kim, \emph{Network information theory}.\hskip 1em plus
  0.5em minus 0.4em\relax Cambridge University Press, 2011.

\bibitem{zamir2002nested}
R.~Zamir, S.~Shamai~(Shitz), and U.~Erez, ``Nested linear/lattice codes for
  structured multiterminal binning,'' \emph{{IEEE} Trans. Inf. Theory},
  vol.~48, no.~6, pp. 1250--1276, Aug. 2002.

\bibitem{geng2014capacity}
Y.~Geng and C.~Nair, ``The capacity region of the two-receiver {Gaussian}
  vector broadcast channel with private and common messages,'' \emph{{IEEE}
  Trans. Inf. Theory}, vol.~60, no.~4, pp. 2087--2104, Apr. 2014.

\bibitem{adhikary2013joint}
A.~Adhikary, J.~Nam, J.-Y. Ahn, and G.~Caire, ``Joint spatial division and
  multiplexing—the large-scale array regime,'' \emph{{IEEE} Trans. Inf.
  Theory}, vol.~59, no.~10, pp. 6441--6463, Oct. 2013.

\bibitem{HJ:85}
R.~A. Horn and C.~R. Johnson, \emph{Matrix Analysis}.\hskip 1em plus 0.5em
  minus 0.4em\relax Cambridge University Press, 1985.

\bibitem{dytso2016interference}
A.~Dytso, D.~Tuninetti, and N.~Devroye, ``Interference as noise: Friend or
  foe?'' \emph{{IEEE} Trans. Inf. Theory}, vol.~62, no.~6, pp. 3561--3596, June
  2016.

\bibitem{cadambe2008interference}
V.~R. Cadambe and S.~A. Jafar, ``Interference alignment and degrees of freedom
  of the {K}-user interference channel,'' \emph{{IEEE} Trans. Inf. Theory},
  vol.~54, no.~8, pp. 3425--3441, Aug. 2008.

\bibitem{motahari2014real}
A.~S. Motahari, S.~Oveis-Gharan, M.-A. Maddah-Ali, and A.~K. Khandani, ``Real
  interference alignment: Exploiting the potential of single antenna systems,''
  \emph{{IEEE} Trans. Inf. Theory}, vol.~60, no.~8, pp. 4799--4810, June 2014.

\bibitem{flores2018tomlinson}
A.~R. {Flores}, B.~{Clerckx}, and R.~C. {de Lamare}, ``Tomlinson-{Harashima}
  precoded rate-splitting for multiuser multiple-antenna systems,'' in
  \emph{15th International Symposium on Wireless Communication Systems
  (ISWCS)}, 2018.

\bibitem{jsac2020}
Z.~Li, C.~Ye, Y.~Cui, S.~Yang, and S.~Shamai~(Shitz), ``Rate splitting for
  multi-antenna downlink: {P}recoder design and practical implementation,''
  \emph{{IEEE} J. Sel. Areas Commun.}, no.~8, pp. 1910--1924, August 2020.

\bibitem{romero2020rate}
H.~Romero and M.~K. Varanasi, ``Rate splitting, superposition coding and
  binning for groupcasting over the broadcast channel: {A} general framework,''
  \emph{arXiv preprint arXiv:2011.04745}, 2020.

\end{thebibliography}
\end{document}